\def\BibTeX{{\rm B\kern-.05em{\sc i\kern-.025em b}\kern-.08em
	T\kern-.1667em\lower.7ex\hbox{E}\kern-.125emX}}
\begin{document}
	\title{Sampling-Based Approximate Skyline Calculation on Big Data\thanks{This work was supported by the National Natural Science Foundation of China under grant 61732003, 61832003, 61972110 and U1811461.}}
	
	\titlerunning{Sampling-Based Approximate Skyline Calculation on Big Data}
	%
	\author{Xingxing Xiao\inst{1} \and
		Jianzhong Li\inst{2}
	}
	\authorrunning{X. Xiao, J. Li}
	%
	\institute{
		\email{xiaoxx@hit.edu.cn}\and \email{lijzh@hit.edu.cn\\
			Harbin Institute of Technology, Harbin, Heilongjiang 150001, China
		}
	}
	\maketitle              
	\begin{abstract}
		The existing algorithms for processing skyline queries cannot adapt to big data. This paper proposes two approximate skyline algorithms based on sampling. The first algorithm obtains a fixed size sample and computes the approximate skyline on the sample. The error of the first algorithm is relatively small in most cases, and is almost independent of the input relation size. The second algorithm returns an  $(\epsilon,\delta)$-approximation for the exact skyline. The size of sample required by the second algorithm can be regarded as a constant relative to the input relation size, so is the running time. Experiments verify the error analysis of the first algorithm and show that the second algorithm is much faster than the existing skyline algorithms. 
		\keywords{Sampling\and Skyline\and Approximation\and Big Data}
	\end{abstract}
	
	\section{Introduction}
	\label{sec_intro}
	Skyline queries are important in many applications involving multi-criteria decision making. Given a relation $T(A_1, A_2, ..., A_d)$ and a set of skyline criteria $C \subseteq \{A_1, ..., A_d\}$, a skyline query on $T$ is to find a subset of $T$ such that each tuple $t$ in the subset is not dominated by any tuple in $T$, where $t'$ dominates $t$, written as $t'\prec t$, means that $t'.A_i \le t.A_i$ for all $A_i \in C$ and there is an attribute $A_j \in C$ such that $t'.A_j < t.A_j$. $t.A_i$ is the value of tuple $t \in T$ on attribute $A_i$. Skyline queries can also be defined using $\ge$ and $>$. Without loss of generality, this paper only considers the skyline queries defined by $\leq$ and $<$.  The answers to a skyline query are all the potentially best tuples to users, and skyline queries provide good mechanisms for merging user's preferences into queries. 
	
	Studies on skyline queries originated in theoretical computer science area in the last century. Skyline was called as the set of maximals or the pareto set in that time. Many algorithms for finding the maximals were proposed \cite{kung1975finding, bentley1978average, bentley1993fast}. The lowest time complexity of these algorithms is $O(n \log^{d-2}n)$ in the worst case, and $O(n)$ in the average case. However, all the algorithms are based on Divide\&Conquer strategy and assume that their input tuples are stored in main memory. 
	
	Borzsony first introduced skyline queries to the database field\cite{borzsony2001skyline}. It attracted considerable attention to design efficient algorithms for processing skyline queries on relations stored in external storage. Many algorithms have been proposed \cite{borzsony2001skyline, chomicki2003skyline, godfrey2005maximal, bartolini2008efficient}. The lowest time complexity of the algorithms is $O(n^2)$ in the worst case, and $O(n)$ in the average case. 
	
	Nowadays, big data is coming to the force in a lot of applications\cite{gao2020recognizing}. Processing a skyline query on big data in more than linear time is by far too expensive and often even linear time may be too slow. Thus, designing a subliner time algorithm for processing skyline queries becomes a highly concerned research subject. Many index-based algorithms for processing skyline queries have been proposed to achieve the sublinear running time in the average case \cite{borzsony2001skyline, tan2001efficient, kossmann2002shooting, papadias2003optimal, lee2010z, han2012efficient}. However, all the algorithms have serious limitations. Firstly, the algorithms require much time for pre-computation, which is at least $\Omega(n)$. Secondly, they need expensive extra space overhead for indexes. Thirdly, there is much overhead to maintain indexes while the input relations are updated. 
	
	Approximation computation \cite{miao2017complexity, cai2019deletion, miao2020hardness, miao2020computation} of the skyline is the only way to break trough the three limitations. Fortunately, approximate skyline results are enough in many applications. An example of skyline queries is to find restaurants near the workplace that provide delicious foods and excellent services. To get the answer quickly, users can accept approximate skyline results that are the good restaurants but not the best ones. Actually, users prefer to get approximate results in seconds rather than exact results in hours or more in many applications. 
	
	There have been many researches on approximate algorithms for skyline queries \cite{koltun2005approximately, lin2007selecting, tao2009distance, magnani2014taking, soholm2016maximum}, but their goal is to reduce the skyline size and approximate the best subset of $k$ input tuples to represent the skyline under various measures. Moreover, they have higher running time than the precise algorithms for processing skyline queries.
	
	In this paper, we propose two approximate algorithms based on sampling \cite{miao2016complexity}, for processing skyline queries on big data.  The proposed algorithms don't need any extra space or pre-computation overhead. Viewing the skyline as a covering, the error of a approximate algorithm is defined as $|\frac {\mathcal{DN}(Sky) - \mathcal{DN}(\widetilde{Sky})}  {\mathcal{DN}(Sky)} |$, where $\mathcal{DN}(\widetilde{Sky})$ is the number of tuples dominated by the approximate result $\widetilde{Sky}$, and $\mathcal{DN}(Sky)$ is the number of tuples dominated by the exact result $Sky$. If $|\frac {\mathcal{DN}(Sky) - \mathcal{DN}(\widetilde{Sky})} {\mathcal{DN}(Sky)}| \leq \epsilon$, then $\epsilon$ is called as the error bound of the approximate algorithm.
	
	The first algorithm draws a random sample from the input relation at the beginning, and then computes the approximate skyline on the sample. The algorithm has two advantages. First, the expected error of the algorithm is almost independent of the input relation size. Second, the standard deviation of the error is relatively small. These advantages have been verified in experiments.
	
	The second algorithm, DOUBLE, is a random algorithm and returns an  $(\epsilon,\delta)$-approximation for the exact skyline efficiently. The size of sample required by DOUBLE is almost a constant relative to the input relation size. DOUBLE first draws an initial sample, and then computes the approximate skyline on the sample. Afterwards, it judges whether the current result meets the requirement by $Monte$ $Carlo$ $method$. If not, it doubles the sample size and repeats the above process. Otherwise it terminates. Extensive experiments show that DOUBLE involves only constant number of tuples, and is much faster than the existing skyline algorithms.
	
	The main contributions of the paper are listed below.
	
	(1) A baseline approximate algorithm for processing skyline queries is proposed, which is based on a sample of size $m$. The running time of the algorithm is $O(m\log^{d-2}m)$ in the worst case and $O(m)$ in the average case. If $m$ is equal to $n^\frac{1}{k}$ $(k>1)$, the baseline algorithm is in sublinear time. If all skyline criteria are independent of each other, the expected error of the algorithm is 
	\begin{displaymath}
	\overline{\varepsilon} \le \frac{n-m}{n}\sum_{i=0}^{d-1}\frac{(\log (m+1))^i}{i!(m+1)}.
	\end{displaymath}
	And the standard deviation of the error is $o(\overline{\varepsilon})$.
	
	(2) An approximate algorithm, DOUBLE, is proposed to return an  $(\epsilon,\delta)$-approximation for the exact skyline efficiently. The expected sample size required by DOUBLE is $O(\mathcal{M}_{\frac{\epsilon}{3},\delta})$, and the expected time complexity of DOUBLE is  $O(\mathcal{M}_{\frac{\epsilon}{3},\delta}\log^{d-1}\mathcal{M}_{\frac{\epsilon}{3},\delta})$, where $\mathcal{M}_{\frac{\epsilon}{3},\delta}$ is the size of sample required by the baseline algorithm to return an ($\frac{\epsilon}{3},\delta$)-approximation. $\mathcal{M}_{\frac{\epsilon}{3},\delta}$ is almost unaffected by the relation size. 
	
	(3) Extensive experiments are performed on three synthetic data sets and a real data set. The synthetic data sets have reached the terabyte level. The experiments verify the theoretical analysis results of the baseline algorithm, and show that DOUBLE is much faster than the existing skyline algorithms.
	
	The remainder of the paper is organized as follows. Section \ref{sec_problem} provides problem definitions. Section \ref{sec_baseline} describes the baseline algorithm and its analysis. Section \ref{sec_double} presents DOUBLE and its analysis. Section \ref{sec_exp} shows the experimental results. Section \ref{sec_con} concludes the paper. 
	
	\section{Problem Definition}
	\label{sec_problem}
	\subsection{Skyline Definition}
	Let $T(A_1,A_2,...,A_d)$ be a relation with $n$ tuples and $d$ attributes, abbreviated as $T$. In the following, we assume that all attributes are skyline criteria. First, we formally define the dominance relationship between tuples in $T$.
	\begin{definition}
		\label{d_dominance_between_tuples}
		\textbf{(Dominance between Tuples)} Let $t$ and $t'$ be tuples in the relation $T(A_1,A_2,...,A_d)$. $t$ dominates $t'$ with respect to the $d$ attributes of $T$, denoted by $t$\textbf{$\prec$} $t'$, if $t.A_i\le t'.A_i$ for all $A_i \in \{A_1,...,A_d\}$, and $\exists A_j \in \{A_1,...,A_d\}$ such that $t.A_j < t'.A_j$. 
	\end{definition}
	Based on the dominance relationship between tuples, we can define the dominance relationship between sets. In the following, $t\preceq t'$ denotes $t\prec t'$ or $t$ = $t'$ with respect to $d$ attributes of $T$.
	\begin{definition}
		\label{d_dominance_between_sets}
		\textbf{(Dominance between Sets)} A tuple set $Q$ dominates another set $Q'$, denoted by $Q \preceq Q'$, if for each tuple $t'$ in $Q'$, there is a tuple $t$ in $Q$ such that $t\prec t'$ or $t = t'$, i.e. $t\preceq t'$. $Q\preceq \{t\}$ can be abbreviated as $Q\preceq t$.
	\end{definition}
	Now, we define the skyline of a relation.
	
	\begin{definition}
		\label{d_skyline}
		\textbf{(Skyline)} Given a relation $T(A_1,A_2,...,A_d)$, the skyline of $T$ is $Sky(T) = \{t\in T | \forall t' \in $T$, t' \not\prec t \} $. 
	\end{definition}
	
	\begin{definition}
		\label{d_skyline_problem}
		\textbf{(Skyline Problem)} The skyline problem is defined as follows.
		
		Input: a relation $T(A_1,A_2,...,A_d)$.
		
		Output: $Sky(T)$.  
	\end{definition}
	
	The skyline problem can be equivalently defined as following optimization problem.
	
	\begin{definition}
		\textbf{(OP-Sky Problem)} 
		OP-Sky problem is defined as follows.
		
		Input: a relation $T(A_1,A_2,...,A_d)$.
		
		Output: $Q\subseteq T$ such that $|\{ t\in T |Q\preceq t\}|$ is maximized 
		and $\forall t_1, t_2 \in$ $Q, t_1\not\prec t_2$.
	\end{definition}
	
	The following theorem 2.1 shows that the $Skyline~Problem$ is equivalent to the $OP$-$Sky$. 
	
	\begin{theorem}
		\label{t_skylineopt}
		The skyline of $T$ is one of the optimal solutions of the problem $OP_1$. If there is no duplicate tuples in $T$, $Sky(T)$ is the unique optimal solution. 
	\end{theorem}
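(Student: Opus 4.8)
The plan is to establish two facts: that $Sky(T)$ is feasible for the problem $OP_1$ (the OP-Sky problem) and attains the maximum objective value, and then, under the no-duplicates hypothesis, that it is the \emph{only} feasible solution attaining that value. I would organize everything around a single structural lemma and then argue uniqueness by double inclusion.

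First I would verify feasibility. By Definition~\ref{d_skyline}, no tuple of $T$ dominates any element of $Sky(T)$; in particular no element of $Sky(T)$ dominates another, since if $t_1,t_2\in Sky(T)$ with $t_1\prec t_2$ then $t_2$ is dominated by the tuple $t_1\in T$, contradicting $t_2\in Sky(T)$. Hence $Sky(T)$ satisfies the antichain constraint $\forall t_1,t_2\in Sky(T),\ t_1\not\prec t_2$. The heart of the argument is then the lemma that every $t\in T$ satisfies $Sky(T)\preceq t$, i.e. some skyline tuple weakly dominates $t$. I would prove this by a descending-chain argument: if $t\in Sky(T)$ then $t\preceq t$; otherwise some $t_1\prec t$, and if $t_1\notin Sky(T)$ pick $t_2\prec t_1$, and so on. Because $\prec$ is a strict partial order and $T$ is finite, this strictly descending chain must terminate, necessarily at a tuple $s\in Sky(T)$, and transitivity gives $s\preceq t$. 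Consequently $\{t\in T \mid Sky(T)\preceq t\}=T$, so the objective value of $Sky(T)$ equals $n=|T|$, the maximum possible. Together with feasibility, this shows $Sky(T)$ is an optimal solution.

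For uniqueness under the no-duplicates assumption, I would take any optimal $Q$ and prove $Q=Sky(T)$ by double inclusion. Optimality forces $\{t\in T\mid Q\preceq t\}=T$, i.e. $Q\preceq t$ for all $t\in T$. For $Sky(T)\subseteq Q$: given $s\in Sky(T)$, some $q\in Q$ satisfies $q\preceq s$; since $s$ is undominated, $q\prec s$ is impossible, so $q=s$, and the absence of duplicates identifies $s$ with the element $q\in Q$, placing $s\in Q$. For $Q\subseteq Sky(T)$: if some $q\in Q$ lay outside $Sky(T)$, the lemma supplies a skyline tuple $s'$ with $s'\prec q$; since $s'\in Sky(T)\subseteq Q$, this yields $s',q\in Q$ with $s'\prec q$, violating the antichain constraint.

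The main obstacle is the descending-chain lemma, and specifically making rigorous why no infinite descending chain can occur. I expect this to rest on antisymmetry of $\prec$ (so a chain cannot revisit a tuple) together with the finiteness of $T$, which forces termination at a maximal, i.e. skyline, element. A secondary subtlety is pinpointing exactly where the no-duplicates hypothesis enters: it is precisely the implication $q=s\Rightarrow s\in Q$, which can fail when a distinct tuple shares all attribute values with $s$; in that case a strictly smaller $Q$ can still dominate all of $T$ while remaining an antichain, which is exactly why uniqueness fails in the presence of duplicates.
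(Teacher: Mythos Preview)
Your proof is correct and complete. The paper itself states Theorem~\ref{t_skylineopt} without proof, so there is nothing to compare against; your argument supplies exactly the missing details. The descending-chain lemma (every tuple is weakly dominated by some skyline tuple), feasibility via the antichain property, and the double-inclusion argument for uniqueness are the standard ingredients, and you handle them cleanly.

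Two minor remarks. First, your argument for $Q\subseteq Sky(T)$ invokes the already-established inclusion $Sky(T)\subseteq Q$; this is fine, but you could also argue directly: optimality gives some $q'\in Q$ with $q'\preceq s'$, and since $s'\in Sky(T)$ is undominated, $q'=s'$ as value vectors, whence $q'\prec q$, contradicting the antichain constraint. This variant does not depend on the first inclusion and makes clear that $Q\subseteq Sky(T)$ holds for \emph{any} optimal $Q$, duplicates or not; the no-duplicates hypothesis is needed only for $Sky(T)\subseteq Q$, exactly as you identify. Second, the property you use to rule out infinite descending chains is \emph{asymmetry} of the strict order $\prec$ (you cannot have both $a\prec b$ and $b\prec a$), rather than antisymmetry; the substance of your argument is unaffected.
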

	
	This paper focus on approximate algorithms for solving the $OP$-$Sky$ problem. The error of an approximate algorithm for an input relation $T$ is defined as $|\frac {\mathcal{DN}(Sky) - \mathcal{DN}(\widetilde{Sky})} {\mathcal{DN}(Sky)}|$, where $\mathcal{DN}(\widetilde{Sky})$ is the number of tuples in $T$ dominated by the approximate solution $\widetilde{Sky}$, and $\mathcal{DN}(Sky)$ is the number of tuples in $T$ dominated by the exact solution $Sky$. If $|\frac {\mathcal{DN}(Sky) - \mathcal{DN}(\widetilde{Sky})} {\mathcal{DN}(Sky)}| \leq \epsilon$, then $\epsilon$ is called as the error bound of the approximate algorithm.  
	
	In the following sections, we will present two approximate algorithms for solving the $OP$-$Sky$ problem.
	
	\section{The Baseline Algorithm and Analysis}
	\label{sec_baseline}
	\subsection{The Algorithm}
	The baseline algorithm first obtains a sample $S$ of size $m$ from the input relation $T$, and then computes the approximate skyline result on $S$. Any existing skyline algorithm can be invoked to compute the skyline of $S$.
	\begin{algorithm}[!htb]
		\label{a_approximation}
		\caption{The Baseline Algorithm}
		\KwIn{The relation $T(A_1,A_2,...,A_d)$ with $n$ tuples, and the sample size  $m$;}
		\KwOut{$\widetilde{Sky}$, i.e. the approximate skyline of $T$}
		$S$ is the sample of $m$ tuples from $T$\;
		\Return getSkyline($S$). /* getSkyline can be any exact skyline algorithm */\
	\end{algorithm}
	\subsection{Error Analysis of The Baseline Algorithm}
	To facilitate the error analysis of the baseline algorithm, we assume that the baseline algorithm is based on sampling without replacement. Let $\varepsilon$ be the error of the algorithm, $\overline{\varepsilon}$ be the expected error of the algorithm, and $\sigma^2$ be the variance of the error. 
	
	\subsubsection{The Expected Error} We first analyze the expected error $\overline{\varepsilon}$ of the baseline algorithm. Assume each tuple in $T$ is a $d$-dimensional \textit{i.i.d.} (independent and identically distributed) random variable.
	
	If the $n$ random variables are continuous, we assume that they have the joint probability distribution function $F(v_{1},v_{2},...,v_{d})=F(\overline{V})$, where $ \overline V=(v_1, v_2, ...v_d)$. Let $f(v_{1},v_{2},...,v_{d})=f(\overline{V})$ be the joint probability density function of the random variables. Without loss of generality, the range of variables on each attribute is $[0, 1]$, since the domain of any attribute of $T$ can be transformed to $[0, 1]$. 
	
	\begin{theorem}
		\label{t_continuous_0}
		If all the $n$ tuples in $T$ are $d$-dimensional \textit{i.i.d.} continuous random variables with the distribution function $F(\overline{V})$, then the expected error of the baseline algorithm is  
		\begin{displaymath}
		\overline{\varepsilon}=\frac{n-m}{n}\int_{[0,1]^d}f(\overline{V})(1-F(\overline{V}))^{m}d\overline{V}
		\end{displaymath}
		where $m$ is the sample size, $f(\overline{V})$ is the density function of the variables, and the range of variables on each attribute is $[0, 1]$.
	\end{theorem}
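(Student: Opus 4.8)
The plan is to reduce the expected error to a counting problem over the tuples of $T$ and then evaluate it by linearity of expectation. First I would observe that, because the random variables are continuous, ties occur with probability zero, so $Sky(T)$ is almost surely the unique set of non-dominated tuples and, by its definition together with Theorem \ref{t_skylineopt}, it $\preceq$-dominates every tuple of $T$. Hence $\mathcal{DN}(Sky)=n$ almost surely and the error collapses to $\varepsilon=\frac{n-\mathcal{DN}(\widetilde{Sky})}{n}$, which is non-negative, so the absolute value can be dropped. The quantity $n-\mathcal{DN}(\widetilde{Sky})$ is then exactly the number of tuples of $T$ that are \emph{not} $\preceq$-dominated by the sample skyline $\widetilde{Sky}=Sky(S)$.

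The next step is to rewrite this count purely in terms of the sample $S$. I would establish the reduction that, for any $t\in T$, one has $Sky(S)\preceq t$ if and only if $S\preceq t$: the forward direction is immediate, and the converse follows because any $s\in S$ dominating $t$ is itself dominated by some element of $Sky(S)$, which then dominates $t$ by transitivity. Consequently a tuple is counted in $n-\mathcal{DN}(\widetilde{Sky})$ precisely when no element of $S$ $\preceq$-dominates it. Every tuple actually in $S$ is trivially $\preceq$-dominated by itself, so only tuples of $T\setminus S$ can be missed, and such a $t$ is missed exactly when no sampled tuple strictly dominates it (equality being impossible since distinct continuous tuples differ almost surely). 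Thus $n-\mathcal{DN}(\widetilde{Sky})=\bigl|\{t\in T\setminus S:\ \forall s\in S,\ s\not\prec t\}\bigr|$.

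I would then apply linearity of expectation over the joint randomness of the $n$ i.i.d.\ values and the uniform size-$m$ sample drawn without replacement. By exchangeability of the tuples it suffices to compute, for a single fixed tuple, the probability that it is both excluded from $S$ and dominated by no sampled tuple. Conditioning on its value $\overline{V}$, a given other tuple fails to dominate $\overline{V}$ with probability $1-F(\overline{V})$, since the boundary where equality would matter has measure zero under continuity. Because the sample indices are chosen independently of the values and, conditioned on those indices, the sampled values remain i.i.d.\ and independent of $\overline{V}$, the probability that none of the $m$ sampled tuples dominates $\overline{V}$ is $(1-F(\overline{V}))^m$; the independence of the sampling from the data contributes the factor $\Pr[\text{tuple}\notin S]=\binom{n-1}{m}/\binom{n}{m}=\frac{n-m}{n}$. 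Integrating the product against $f(\overline{V})$ over $[0,1]^d$ and multiplying by $n$ gives $E[n-\mathcal{DN}(\widetilde{Sky})]=(n-m)\int_{[0,1]^d}f(\overline{V})(1-F(\overline{V}))^m\,d\overline{V}$, and dividing by $n$ yields the claimed formula.

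The step I expect to be most delicate is the reduction argument and its interaction with the without-replacement sampling: I must verify that ``domination by $Sky(S)$'' collapses cleanly to ``domination by some sampled value,'' that self-domination correctly removes the sampled tuples from the count, and that conditioning on the chosen indices preserves the mutual independence of the relevant values so that the $m$-th power and the $\frac{n-m}{n}$ factor genuinely decouple. The continuity hypothesis is precisely what renders all the boundary and tie cases negligible, and I would flag explicitly each place where it is invoked.
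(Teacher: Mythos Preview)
Your proposal is correct and follows essentially the same route as the paper: reduce $\mathcal{DN}(Sky(S))$ to $\mathcal{DN}(S)$, write the error as a sum of indicators for ``$t_i$ not $\preceq$-dominated by $S$,'' split on membership in $S$ to extract the factor $\frac{n-m}{n}$, and then use continuity plus independence of the remaining $m$ sampled values to obtain $(1-F(\overline V))^m$ before integrating against $f$. The paper's argument is the same in structure, only slightly less explicit about the equivalence $Sky(S)\preceq t\Leftrightarrow S\preceq t$ and about the independence justification under sampling without replacement, both of which you already plan to spell out.
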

	\begin{proof}
		Due to $\mathcal{DN}(Sky(S))$ = $\mathcal{DN}(S)$ $\le$ $\mathcal{DN}(Sky(T))$ = $n$, where $n$ is the size of the relation $T$, we have 
		\begin{displaymath}
		\varepsilon = |\frac{\mathcal{DN}(Sky(T))-\mathcal{DN}(Sky(S))}{\mathcal{DN}(Sky(T))}| = \frac{n-\mathcal{DN}(S)}{n}
		\end{displaymath}
		
		Let $X_i$ be a random variable for $1 \le i \le n$, and $t_i$ be the $i^{th}$ tuple in $T$. $X_i=0$ if $t_i$ in $T$ is dominated by the sample $S$, otherwise $X_i=1$. Thus, we have $\mathcal{DN}(S)=n-\sum_{i=1}^{n}X_i$ and $\varepsilon=\frac{\sum_{i=1}^{n}X_i}{n}$. By the linearity of expectations, the expected error of the baseline algorithm is $\overline{\varepsilon}  = \frac{\sum_{i=1}^{n}EX_i}{n} = \frac{\sum_{i=1}^{n}Pr(X_i=1)}{n} = Pr(X_i=1)$, where $Pr(X_i=1)$ is the probability that $t_i$ in $T$ is not dominated by $S$. 
		
		Let $Y_i$ be a random variable for $1 \le i \le n$. $Y_i=0$ if $t_i$ in $T$ is picked up into the sample $S$, otherwise $Y_i=1$. According to the conditional probability formula, we have 
		\begin{displaymath}
		Pr(X_i=1) = Pr(Y_i=0)Pr(X_i=1 | Y_i=0) + Pr(Y_i=1)Pr(X_i=1 | Y_i=1)
		\end{displaymath}
		If $t_i$ is selected into in $S$, then it is dominated by $S$. Therefore, we have $Pr(X_i=1 | Y_i=0)$ is equal to $0$. Due to sampling with replacement, $Pr(Y_i=1)$ is equal to $\frac{n-m}{n}$. In short, we have 
		\begin{displaymath}
		Pr(X_i=1) = \frac{n-m}{n} Pr(X_i=1 | Y_i=1)
		\end{displaymath}
		Assume $t_i$ is not selected into $S$. Let $t_i$ have the value $\overline{V} = (v_1,v_2,...,v_d)$. Subsequently, for the $j_{th}$ tuple $t'_j$ in $S$, $t'_j$ satisfies the distribution $F$ and is independent of $t_i$. It is almost impossible that $t_i$ has a value equal to $t'_j$ on an attribute. The probability of $t'_j\prec t_i$ is $F(\overline{V})$. In turn, we have 
		$Pr(t'_j\not\preceq t_i | Y_i=1)=1-F(\overline{V})$.
		
		Because $S$ is a random sample without replacement, all tuples in $S$ are  distinct tuples from $T$. All the tuples in $T$ are independently distributed, so are the tuples in $S$. Therefore, the probability that $S$ doesn't dominate $\{t_i\}$ is
		\begin{displaymath}
		Pr(S\not\preceq \{t_i\} | Y_i=1)=\prod_{j=1}^{m}Pr(t'_j\not\preceq t_i | Y_i=1)=(1-F(\overline{V}))^{m}
		\end{displaymath}
		
		In the analysis above, $\overline V$ is regarded as a constant vector. 
		Since $\overline V$ is a variable vector and has the density function $f(\overline{V})$, we have 
		\begin{displaymath}
		Pr(X_i=1 | Y_i=1)= \int_{[0,1]^d}f(\overline{V})(1-F(\overline{V}))^{m}d\overline{V}
		\end{displaymath} 
		
		Thus the probability that $t_i$ is not dominated by $S$ is 
		\begin{displaymath}
		Pr(X_i=1)=\frac{n-m}{n}\int_{[0,1]^d}f(\overline{V})(1-F(\overline{V}))^{m}d\overline{V}
		\end{displaymath}
		$\hfill\square$ 
	\end{proof}
	
	\begin{corollary}
		\label{c_continuous_0}
		If all the $n$ tuples in $T$ are $d$-dimensional \textit{i.i.d.} continuous random variables, then the expected error of the baseline algorithm is  
		\begin{displaymath}
		\overline{\varepsilon}=\frac{n-m}{n}\frac{\mu_{m+1,d}}{m+1}
		\end{displaymath}
		where $m$ is the sample size and $\mu_{m+1,d}$ is the expected skyline size of a set of $m+1$ $d$-dimensional \textit{i.i.d.} random variables with the same distribution. 
	\end{corollary}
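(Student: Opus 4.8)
The plan is to start from Theorem~\ref{t_continuous_0}, which already reduces the expected error to
\[
\overline{\varepsilon}=\frac{n-m}{n}\int_{[0,1]^d}f(\overline{V})(1-F(\overline{V}))^{m}d\overline{V},
\]
so the entire task collapses to showing that
\[
\int_{[0,1]^d}f(\overline{V})(1-F(\overline{V}))^{m}d\overline{V}=\frac{\mu_{m+1,d}}{m+1}.
\]
The idea is to give this integral a direct probabilistic meaning in terms of a fresh collection of $m+1$ i.i.d.\ points, rather than to manipulate it analytically.

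First I would introduce $m+1$ independent points $t_1,\dots,t_{m+1}$ drawn from the same continuous distribution, and express the expected skyline size by linearity of expectation as $\mu_{m+1,d}=\sum_{i=1}^{m+1}\Pr(t_i\text{ is not dominated by any other }t_j)$. Because the points are i.i.d., all $m+1$ summands are equal, so $\mu_{m+1,d}=(m+1)\,\Pr(t_1\text{ is not dominated by }t_2,\dots,t_{m+1})$. This exchangeability step is exactly what produces the factor $m+1$ in the denominator of the claimed formula.

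Next I would evaluate that single non-domination probability by conditioning on the value of $t_1$, mirroring the argument already used in the proof of Theorem~\ref{t_continuous_0}. Given $t_1=\overline{V}$, any other point $t_j$ dominates $t_1$ with probability $F(\overline{V})$; since the distribution is continuous, ties occur with probability zero, so ``not dominated'' and ``not weakly dominated'' coincide almost surely. By independence, the probability that none of the $m$ remaining points dominates $t_1$ is $(1-F(\overline{V}))^{m}$, and integrating against the density gives $\Pr(t_1\text{ undominated})=\int_{[0,1]^d}f(\overline{V})(1-F(\overline{V}))^{m}d\overline{V}$. Combining this with the exchangeability step yields the desired identity, and substituting it into Theorem~\ref{t_continuous_0} completes the proof.

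I do not expect a serious technical obstacle: the real content is the recognition that the integral appearing in Theorem~\ref{t_continuous_0} is precisely the per-point survival probability in an $(m+1)$-point sample, so the corollary follows almost immediately. The only points needing care are the exchangeability argument that converts $m+1$ identical probabilities into the factor $(m+1)$, and the standard remark that continuity makes ties negligible, so that strict domination and weak domination agree almost surely.
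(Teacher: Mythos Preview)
Your proposal is correct and follows essentially the same approach as the paper: the paper's proof simply asserts that $\mu_{m+1,d}=(m+1)\int_{[0,1]^d}f(\overline{V})(1-F(\overline{V}))^{m}d\overline{V}$ for a set of $m+1$ i.i.d.\ points with distribution $F$ and then invokes Theorem~\ref{t_continuous_0}, which is exactly the identity you derive via exchangeability, linearity of expectation, and conditioning on the value of a single point. Your version is more carefully spelled out (in particular the remark about continuity killing ties), but the underlying argument is the same.
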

	\begin{proof}
		Let $Q$ be a set of $m+1$ $d$-dimensional \textit{i.i.d.} random variables with the distribution function $F$, then the expected skyline size of $Q$ is
		$\mu_{m+1,d}=(m+1)\int_{[0,1]^d}f(\overline{V})(1-F(\overline{V}))^{m}d\overline{V}$. Based on theorem \ref{t_continuous_0}, we get the corollary. 
		$\hfill\square$
	\end{proof}
	
	If the $n$ random variables are discrete, we assume that they have the joint probability mass function as follows
	\begin{displaymath}
	g(v_{1}, v_{2},..., v_{d}) = Pr(A_1=v_1,A_2=v_2,...,A_d=v_d)
	\end{displaymath}
	Let $G(v_{1},v_{2},...,v_{d})=G(\overline{V})$ be the probability distribution function of the variables. Assume that $\mathcal{V}$ is the set of all tuples in $T$, i.e. all value vectors of the $d$-dimensional variables.
	\begin{theorem}
		\label{t_discrete_0}
		If all the $n$ tuples in $T$ are $d$-dimensional \textit{i.i.d.} discrete random variables with the distribution function $G(\overline{V})$, then the expected error of the baseline algorithm is 
		\begin{displaymath}
		\overline{\varepsilon}=\frac{n-m}{n}\sum_{\overline{V}\in \mathcal{V}}g(\overline{V})(1-G(\overline{V}))^{m}
		\end{displaymath}
		where $m$ is the sample size, $\mathcal{V}$ is the set of all value vectors of the $d$-dimensional variables and $g$ is the mass function.
	\end{theorem}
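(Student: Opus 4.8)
The plan is to follow the same skeleton as the proof of Theorem~\ref{t_continuous_0}, replacing the integral over the density by a sum over the mass function and carefully accounting for ties, which now occur with positive probability rather than measure zero. First I would reduce the error to a single non-domination probability exactly as in the continuous case: writing $X_i=1$ for the event that the $i^{th}$ tuple $t_i$ of $T$ is not dominated by the sample $S$ (in the covering sense, $S\not\preceq t_i$), we have $\varepsilon=\frac{1}{n}\sum_{i=1}^{n}X_i$, and by linearity of expectation $\overline{\varepsilon}=Pr(X_i=1)$. Conditioning on whether $t_i$ is drawn into $S$ (the variable $Y_i$), the term $Pr(X_i=1\mid Y_i=0)$ vanishes because a sampled tuple covers itself, and sampling without replacement gives $Pr(Y_i=1)=\frac{n-m}{n}$, so that $Pr(X_i=1)=\frac{n-m}{n}\,Pr(X_i=1\mid Y_i=1)$.

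Next I would compute $Pr(X_i=1\mid Y_i=1)$ by fixing the value $t_i=\overline{V}$ and using the independence of the sampled values. The key step, and the one place the discrete case genuinely differs from the continuous one, is to evaluate $Pr(t'_j\preceq t_i\mid t_i=\overline{V})$ for a single sampled tuple $t'_j$. Here I cannot discard ties as in the continuous proof. Instead I observe that $t'_j\preceq t_i$ (i.e.\ $t'_j\prec t_i$ or $t'_j=t_i$) holds exactly when every coordinate of $t'_j$ is at most the corresponding coordinate of $\overline{V}$; this event is precisely the one measured by the distribution function, so $Pr(t'_j\preceq t_i\mid t_i=\overline{V})=G(\overline{V})$ and hence $Pr(t'_j\not\preceq t_i\mid t_i=\overline{V})=1-G(\overline{V})$. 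Because the $n$ tuples are \textit{i.i.d.} and sampling without replacement merely selects distinct indices, the $m$ sampled values are mutually independent and independent of $t_i$ once we condition on $t_i\notin S$; multiplying over the $m$ sample tuples gives $Pr(X_i=1\mid Y_i=1,\,t_i=\overline{V})=(1-G(\overline{V}))^{m}$.

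Finally I would remove the conditioning on the value $\overline{V}$. Since $t_i$ is discrete with mass function $g$, I average over $\overline{V}\in\mathcal{V}$ to obtain $Pr(X_i=1\mid Y_i=1)=\sum_{\overline{V}\in\mathcal{V}}g(\overline{V})(1-G(\overline{V}))^{m}$, and combining with the factor $\frac{n-m}{n}$ yields the claimed expression for $\overline{\varepsilon}$.

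I expect the only real obstacle to be the treatment of equal values: one must confirm that the covering relation $\preceq$ (rather than strict dominance $\prec$) is the correct event defining $X_i$, that the boundary equality case is already subsumed in $G(\overline{V})$, and that conditioning on $t_i\notin S$ preserves both the independence and the identical distribution of the remaining sampled values. Once this point is settled, the summation over $\mathcal{V}$ simply replaces the integral over $[0,1]^d$, and the derivation closes in exact parallel with Theorem~\ref{t_continuous_0}.
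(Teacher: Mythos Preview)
Your proposal is correct and follows exactly the approach the paper intends: the paper's own ``proof'' of Theorem~\ref{t_discrete_0} is a single sentence stating that it is basically the same as Theorem~\ref{t_continuous_0} except that duplicate tuples need to be considered. Your write-up supplies precisely that missing detail, namely the observation that in the discrete setting $t'_j\preceq t_i$ is equivalent to componentwise $\le$ and hence has probability exactly $G(\overline V)$, after which the argument proceeds verbatim with the integral replaced by a sum.
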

	The proof is basically the same as theorem \ref{t_continuous_0}, except that duplicate tuples need to be considered. 
	
	Based on theorem \ref{t_continuous_0} and \ref{t_discrete_0}, the relation size has almost no effect on the expected error of the baseline algorithm. Indeed, $m$ is equal to $o(n)$, and $\frac{n-m}{n}$ approaches to $1$ in most cases. 
	
	\begin{corollary}
		\label{c_discrete_0}
		If all the $n$ tuples in $T$ are $d$-dimensional \textit{i.i.d.} discrete random variables, then the expected error of the baseline algorithm is 
		\begin{equation}
		\label{ine1}
		\overline{\varepsilon} \le \frac{n-m}{n}\frac{\mu_{m+1,d}}{m+1}
		\end{equation}
		where $m$ is the sample size and $\mu_{m+1,d}$ is the expected skyline size of a set of $m+1$ $d$-dimensional \textit{i.i.d.} random variables with the same distribution. If there is no duplicate tuples in $T$, then the equality of (\ref{ine1}) holds.
	\end{corollary}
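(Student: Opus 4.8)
The plan is to follow the template of Corollary \ref{c_continuous_0}, replacing the integral over $[0,1]^d$ by a sum over the support $\mathcal{V}$ and carefully tracking ties, which is precisely the complication flagged in the remark after Theorem \ref{t_discrete_0}. I would start from the exact error expression supplied by Theorem \ref{t_discrete_0}, namely $\overline{\varepsilon}=\frac{n-m}{n}\sum_{\overline{V}\in \mathcal{V}}g(\overline{V})(1-G(\overline{V}))^{m}$, and then produce a matching closed form for $\mu_{m+1,d}$ so that the two sums can be compared term by term.

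The key intermediate step is a formula for the expected skyline size of a set $Q$ of $m+1$ i.i.d. points. By linearity of expectation and exchangeability, $\mu_{m+1,d}=(m+1)\,Pr(X_1\in Sky(Q))$. Conditioning on $X_1=\overline{V}$, a competitor $X_j$ evicts $X_1$ from the skyline only when $X_j\prec X_1$, i.e. $X_j\le\overline{V}$ componentwise but $X_j\neq\overline{V}$; since equal-valued tuples do not dominate one another, this has probability $G(\overline{V})-g(\overline{V})$ rather than $G(\overline{V})$. Independence of the $m$ competitors then gives $Pr(X_1\in Sky(Q)\mid X_1=\overline{V})=(1-G(\overline{V})+g(\overline{V}))^{m}$, so that
\begin{displaymath}
\frac{\mu_{m+1,d}}{m+1}=\sum_{\overline{V}\in\mathcal{V}}g(\overline{V})\bigl(1-G(\overline{V})+g(\overline{V})\bigr)^{m}.
\end{displaymath}

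With both quantities in hand the inequality is immediate. Because $g(\overline{V})\ge 0$ for every $\overline{V}\in\mathcal{V}$, we have $(1-G(\overline{V}))^{m}\le(1-G(\overline{V})+g(\overline{V}))^{m}$ termwise, and summing against the nonnegative weights $g(\overline{V})$ yields $\sum_{\overline{V}\in\mathcal{V}}g(\overline{V})(1-G(\overline{V}))^{m}\le \mu_{m+1,d}/(m+1)$. Multiplying through by $\frac{n-m}{n}$ gives $\overline{\varepsilon}\le\frac{n-m}{n}\frac{\mu_{m+1,d}}{m+1}$, which is exactly (\ref{ine1}).

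The part requiring the most care, and which I expect to be the main obstacle, is the equality claim. The only slack in the chain is the termwise gap between $(1-G(\overline{V}))^{m}$ and $(1-G(\overline{V})+g(\overline{V}))^{m}$, and this gap is exactly the contribution of equal-valued tuples: the weak relation $\preceq$ underlying $\mathcal{DN}$ (hence the error, via $1-G$) differs from the strict relation $\prec$ underlying the skyline (hence $1-G+g$) only on pairs of identical tuples. Thus when $T$ has no duplicate tuples the two relations agree on every comparison, the tie mass $g(\overline{V})$ never enters, and (\ref{ine1}) holds with equality. The delicate point to state rigorously is that for a genuinely discrete distribution $g(\overline{V})>0$ on the support would otherwise force the bound to be strict; the resolution is to read \emph{no duplicate tuples in $T$} as a condition on the realized relation (and therefore on the $m+1$ points actually compared), under which the extra $g$-term simply does not arise, making the error integrand and the skyline-size integrand coincide.
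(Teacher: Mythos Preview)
Your proposal is correct and follows essentially the same route as the paper: derive $\mu_{m+1,d}=(m+1)\sum_{\overline{V}\in\mathcal{V}}g(\overline{V})(1-G(\overline{V})+g(\overline{V}))^{m}$, compare it termwise with the error formula from Theorem~\ref{t_discrete_0} using $g(\overline{V})\ge 0$, and read off the equality case from the absence of duplicates. Your treatment is in fact more detailed than the paper's, which simply states the skyline-size formula and the termwise inequality without the justification you supply.
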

	\begin{proof}
		Let $Q$ be a set of $m+1$ $d$-dimensional \textit{i.i.d.} random variables with the distribution function $G$, then the expected skyline size of $Q$ is
		\begin{align}
		\notag\mu_{m+1,d}=&(m+1)\sum_{\overline{V}\in \mathcal{V}}g(\overline{V})(1-G(\overline{V})+g(\overline{V}))^{m} \\
		\ge& (m+1)\sum_{\overline{V}\in \mathcal{V}}g(\overline{V})(1-G(\overline{V}))^{m} \label{ine2}
		\end{align}
		the equality of (\ref{ine2}) holds if and only if there is no duplicate tuples in $Q$. Based on theorem \ref{t_discrete_0}, the corollary is proved. 
		$\hfill\square$
	\end{proof}
	
	By the analysis of the expected skyline size under stronger assumptions in \cite{godfrey2004skyline}, we further analyze the expected error of the baseline algorithm.
	\begin{definition}
		\label{d_CI}
		\textbf{(Component independence)} $T(A_1,A_2,...,A_d)$ satisfies component independence ($CI$), if all $n$ tuples in $T$  follow the conditions below.
		\begin{enumerate}
			\item \textbf{(Attribute Independence)} the values of tuples in $T$ on a single attribute are statically independent of the values on any other attribute;
			\item \textbf{(Distinct Values)} $T$ is sparse, i.e. any two tuples in $T$ have different values on each attribute.
		\end{enumerate}
	\end{definition}
	
	\begin{theorem}
		\label{t_CIdistribution}
		Under $CI$, the error of the baseline algorithm is unaffected by the specific distribution of $T$.
	\end{theorem}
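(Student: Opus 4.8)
The plan is to exploit both clauses of $CI$ to reduce the error of the algorithm to a canonical, distribution-free quantity by applying the probability integral transform coordinate by coordinate. First I would define the coordinate-wise map $\Phi(t)=(F_1(t.A_1),\ldots,F_d(t.A_d))$, where $F_i$ is the marginal distribution function of attribute $A_i$. By the attribute-independence clause of $CI$, the joint distribution factorizes as $F(\overline V)=\prod_{i=1}^{d}F_i(v_i)$; moreover each $F_i(t.A_i)$ is uniform on $[0,1]$ by the standard probability integral transform, and independence across attributes makes $\Phi(t)$ uniform on $[0,1]^d$. Since the original tuples are \textit{i.i.d.}, the images $\Phi(t_1),\ldots,\Phi(t_n)$ are \textit{i.i.d.} uniform on $[0,1]^d$, and this limiting configuration does not depend on the particular marginals $F_i$.

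Next I would observe that each $F_i$ is non-decreasing, so $\Phi$ preserves the coordinate-wise order and hence the dominance relation: $t\prec t'$ if and only if $\Phi(t)$ dominates $\Phi(t')$ in the sense of Definition \ref{d_dominance_between_tuples}. Here the distinct-values clause of $CI$ is essential — because no two tuples share a value on any attribute, all the relevant inequalities among the observed coordinates are strict, so $\Phi$ neither creates nor destroys ties and transports the strict dominance partial order faithfully. Consequently the dominance structure on $\{t_1,\ldots,t_n\}$ is order-isomorphic to that of $n$ \textit{i.i.d.} uniform points on $[0,1]^d$.

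I would then note that the error $\varepsilon=(n-\mathcal{DN}(S))/n$ is a function solely of this dominance partial order together with the identities of the $m$ sampled indices, and the sample is drawn uniformly at random independently of the values. Since $\Phi$ carries the partial order onto the universal uniform object, which is independent of the $F_i$, the entire distribution of $\varepsilon$ — and therefore $\overline\varepsilon$, $\sigma^2$, and every moment — coincides with its value under the uniform distribution, proving the claim. As a concrete instantiation, starting from Theorem \ref{t_continuous_0} the change of variables $u_i=F_i(v_i)$ collapses the expected-error formula to
\begin{displaymath}
\overline\varepsilon=\frac{n-m}{n}\int_{[0,1]^d}\Bigl(1-\prod_{i=1}^{d}u_i\Bigr)^{m}\,\prod_{i=1}^{d}du_i,
\end{displaymath}
which manifestly depends only on $n$, $m$, and $d$.

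The main obstacle is the careful handling of ties and of the non-invertibility of $F_i$ on plateaus. One must argue that the distinct-values assumption, combined with continuity of the marginals, confines the observed coordinate values almost surely to the region where each $F_i$ is strictly increasing, so that $\Phi$ is an order isomorphism between the original and the transformed configurations with probability one; values lying where a density vanishes occur with probability zero and never affect the argument. Once this measure-zero bookkeeping is in place, distribution-freeness follows immediately from the monotonicity of $\Phi$.
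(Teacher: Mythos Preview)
Your proposal is correct and shares the paper's core idea---reduce to a canonical configuration via a coordinate-wise monotone map that preserves dominance---but the specific transformation differs. The paper applies the \emph{empirical rank} transform: on each attribute $A_i$ it sorts the $n$ observed values and replaces them by $0,1/n,\ldots,(n-1)/n$, so the resulting configuration depends only on the $d$ independent random permutations induced by the orderings and hence not on the marginals $F_i$. You instead apply the \emph{population} probability integral transform $\Phi=(F_1,\ldots,F_d)$ to land on $n$ i.i.d.\ uniform points in $[0,1]^d$. Your route is more analytic and delivers the explicit integral for $\overline\varepsilon$ as a by-product, but it leans on continuity of the marginals (a subtlety you rightly flag in your last paragraph). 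The paper's rank map is more elementary and works for any realized $T$ satisfying the distinct-values clause, continuous or not; on the other hand, the paper leaves the key step ``error unchanged after conversion'' as a one-line assertion, whereas you spell out carefully why $\varepsilon$ is a function only of the dominance partial order together with the sampled index set, which is the substantive content of the theorem.
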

	\begin{proof}
		If $T$ satisfies component independence, it can be converted into an uniformly and independently distributed set. After conversion, the error of the basline algorithm remains unchanged. The specific conversion process is as follows. Consider each attribute in turn. For the attribute $A_i$, sort tuples in ascending order by values on $A_i$. Then rank $0$ is allocated the lowest value $0$ on $A_i$, and so forth. Rank $j$ is allocated the value $j/n$ on $A_i$.
		$\hfill\square$
	\end{proof}
	
	From \cite{godfrey2004skyline}, we have the following lemma. 
	\begin{lemma}
		Under $CI$, the expected skyline size of $T$ is equal to the $(d-1)^{th}$ order harmonic of $n$, denoted by $H_{d-1,n}$. 
	\end{lemma}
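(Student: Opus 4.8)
The plan is to reduce to the uniform case, express the expected skyline size as an explicit integral via the results already established, and then extract a recursion on the dimension $d$ that coincides with the defining recursion of the higher-order harmonic numbers. First I would invoke Theorem \ref{t_CIdistribution}: under $CI$ the expected skyline size does not depend on the particular marginal distributions, so without loss of generality I may assume every attribute is uniform on $[0,1]$ with the $d$ coordinates mutually independent. Then $f(\overline V)=1$ and $F(\overline V)=\prod_{i=1}^d v_i$ on $[0,1]^d$, and the integral formula underlying Corollary \ref{c_continuous_0} (applied to the $n$ tuples directly rather than to $m+1$) gives the clean expression
\begin{displaymath}
\mu_{n,d}=n\int_{[0,1]^d}\Bigl(1-\prod_{i=1}^d v_i\Bigr)^{n-1}\,d\overline V.
\end{displaymath}
Writing $J(n,d)$ for this right-hand side, the goal becomes to prove $J(n,d)=H_{d-1,n}$.

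Next I would integrate out a single coordinate, say $v_d$, holding $P=\prod_{i<d}v_i$ fixed. The inner integral $\int_0^1(1-Pv_d)^{n-1}\,dv_d$ evaluates to $\bigl(1-(1-P)^n\bigr)/(nP)$ by an elementary substitution, so that
\begin{displaymath}
J(n,d)=\int_{[0,1]^{d-1}}\frac{1-(1-P)^n}{P}\,dv_1\cdots dv_{d-1}.
\end{displaymath}
The decisive step is the geometric-series identity $\frac{1-(1-P)^n}{P}=\sum_{j=0}^{n-1}(1-P)^j$, which turns the integrand into a sum of powers of $1-P$. Integrating term by term and recognizing each term as a lower-dimensional instance of the same quantity yields the recursion
\begin{displaymath}
J(n,d)=\sum_{j=0}^{n-1}\int_{[0,1]^{d-1}}(1-P)^j\,dv'=\sum_{k=1}^{n}\frac{J(k,d-1)}{k},
\end{displaymath}
where $dv'=dv_1\cdots dv_{d-1}$ and I have used $\int_{[0,1]^{d-1}}(1-P)^{k-1}\,dv'=J(k,d-1)/k$ with the reindexing $k=j+1$.

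Finally, the base case $d=1$ gives $J(n,1)=n\int_0^1(1-v)^{n-1}\,dv=1=H_{0,n}$. Since the $(d-1)$-th order harmonic numbers are defined by precisely the recursion $H_{d-1,n}=\sum_{k=1}^n H_{d-2,k}/k$ with $H_{0,n}=1$, induction on $d$ immediately delivers $J(n,d)=H_{d-1,n}$, which is the claim. I expect the main obstacle to be this recursion step: one must both spot the geometric-series rewriting that collapses $\frac{1-(1-P)^n}{P}$ into a finite sum, and correctly identify $\int_{[0,1]^{d-1}}(1-P)^{k-1}\,dv'$ as the $(d-1)$-dimensional quantity $J(k,d-1)/k$, so that the dimension drops cleanly and the harmonic recursion emerges. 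The reduction to the uniform model and the verification of the base case are routine by comparison.
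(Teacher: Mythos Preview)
Your argument is correct. The paper, however, does not prove this lemma at all: it simply quotes the result from \cite{godfrey2004skyline}, so there is no in-paper proof to compare against. Your self-contained derivation---reducing to the uniform model, evaluating the one-variable integral, using the geometric-series identity $\frac{1-(1-P)^n}{P}=\sum_{j=0}^{n-1}(1-P)^j$ to obtain the recursion $J(n,d)=\sum_{k=1}^{n} J(k,d-1)/k$, and matching it to the defining recursion of $H_{d-1,n}$---is the standard route and every step checks out.

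One small imprecision worth fixing: Theorem~\ref{t_CIdistribution} as stated concerns the \emph{error of the baseline algorithm}, not the expected skyline size. What you actually need is that under $CI$ the skyline membership of each tuple (hence the skyline size) depends only on the componentwise ranks, which is exactly what the rank-transformation argument in the \emph{proof} of that theorem shows. Cite the proof technique rather than the theorem statement, or state the distribution-invariance of the skyline under $CI$ as a separate observation.
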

	
	For integers $k > 0$ and integers $n > 0$, $H_{d,n} = \sum_{i=1}^{n} \frac{H_{d-1,i}}{i}$. 
	From \cite{buchta1989average} and \cite{devroye1980note}, we have 
	\begin{displaymath}
	H_{d,n}=\frac{(\log n)^d}{d!}+\gamma\frac{(\log n)^{d-1}}{(d-1)!}+O((\log n)^{d-2}) \le \sum_{i=0}^{d}\frac{(\log n)^i}{i!}
	\end{displaymath}
	, where $\gamma = 0.577...$ is Euler's constant.
	
	From definition \ref{d_CI}, there is no duplicate tuples in $T$ under $CI$. Thus,  based on corollary \ref{c_continuous_0} and \ref{c_discrete_0}, we have the following corollary. 
	\begin{corollary}
		\label{c_CI}
		If the relation $T(A_1,A_2,...,A_d)$ with $n$ tuples satisfies $CI$, then the expected error of the baseline algorithm is   
		\begin{displaymath}
		\overline{\varepsilon}= \frac{n-m}{n(m+1)}H_{d-1,n}\le \frac{n-m}{n(m+1)}\sum_{i=0}^{d-1}\frac{(\log (m+1))^i}{i!}
		\end{displaymath}
		where $m$ is the sample size.
	\end{corollary}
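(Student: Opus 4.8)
The plan is to read the stated identity off the exact error formula of Corollary~\ref{c_discrete_0}, specialized to the component-independent case, and then to estimate the resulting harmonic number by the series bound recalled just before the corollary. The two clauses of Definition~\ref{d_CI} supply precisely the hypotheses this route needs: the Distinct Values clause forces $T$ to contain no duplicate tuples, which is exactly the condition under which the inequality~(\ref{ine1}) of Corollary~\ref{c_discrete_0} is tight, while the Attribute Independence clause is what lets the Lemma rewrite the expected skyline size as a harmonic number.

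First I would turn the bound of Corollary~\ref{c_discrete_0} into an equality. Since under $CI$ any two tuples of $T$ differ on every attribute, $T$ contains no duplicate tuples, so the equality case of~(\ref{ine1}) applies and $\overline{\varepsilon}=\frac{n-m}{n(m+1)}\,\mu_{m+1,d}$; the same identity also follows from the continuous version, Corollary~\ref{c_continuous_0}, after the reduction of Theorem~\ref{t_CIdistribution} replaces $T$ by its canonical uniform-and-independent instance. Second, I would substitute the harmonic formula: the Lemma identifies the expected skyline size under $CI$ with the $(d-1)$-th order harmonic number $H_{d-1,n}$, and feeding this value into the expression for $\mu_{m+1,d}$ gives the equality $\overline{\varepsilon}=\frac{n-m}{n(m+1)}H_{d-1,n}$ asserted in the corollary.

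Finally I would bound the harmonic factor by its truncated exponential series. The estimate quoted before the corollary, $H_{d,N}\le\sum_{i=0}^{d}\frac{(\log N)^i}{i!}$, taken at order $d-1$, bounds the harmonic number by $\sum_{i=0}^{d-1}\frac{(\log (m+1))^i}{i!}$, and multiplying through by the nonnegative factor $\frac{n-m}{n(m+1)}$ yields the right-hand inequality. I expect the main obstacle to lie in the substitution step: one must confirm that the harmonic count furnished by the Lemma is exactly the expected skyline size denoted $\mu_{m+1,d}$ in Corollaries~\ref{c_continuous_0} and~\ref{c_discrete_0}, that is, that the distribution-invariance guaranteed by Theorem~\ref{t_CIdistribution} together with the no-duplicates property of $CI$ lets one apply the Lemma to the very i.i.d.\ family that defines $\mu_{m+1,d}$, after which the equality and the harmonic bound follow mechanically.
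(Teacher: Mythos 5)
Your proposal is correct and takes essentially the same route as the paper, which states this corollary without a separate proof as exactly the chain you describe: the Distinct Values clause of $CI$ rules out duplicate tuples so that inequality~(\ref{ine1}) of Corollary~\ref{c_discrete_0} becomes an equality, giving $\overline{\varepsilon}=\frac{n-m}{n(m+1)}\mu_{m+1,d}$; the Lemma applied to the $(m+1)$-element i.i.d.\ family evaluates that expected skyline size as a harmonic number; and the quoted series estimate yields the upper bound. The one index to fix is the argument you carry through the substitution: applying the Lemma to $m+1$ tuples gives $\mu_{m+1,d}=H_{d-1,m+1}$, not $H_{d-1,n}$ --- the printed statement shares this slip, and the $\log(m+1)$ on the right-hand side confirms $H_{d-1,m+1}$ is intended, since with $H_{d-1,n}$ the stated inequality would fail for $n\gg m$; this is precisely the subtlety you correctly flag in your final paragraph.
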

	
	If there are tuples in $T$ with the same values on an attribute and Attribute Independence in definition \ref{d_CI} holds, we have the following corollary. 
	
	\begin{figure}[t]
		\centering
		\includegraphics[width=0.6\linewidth]{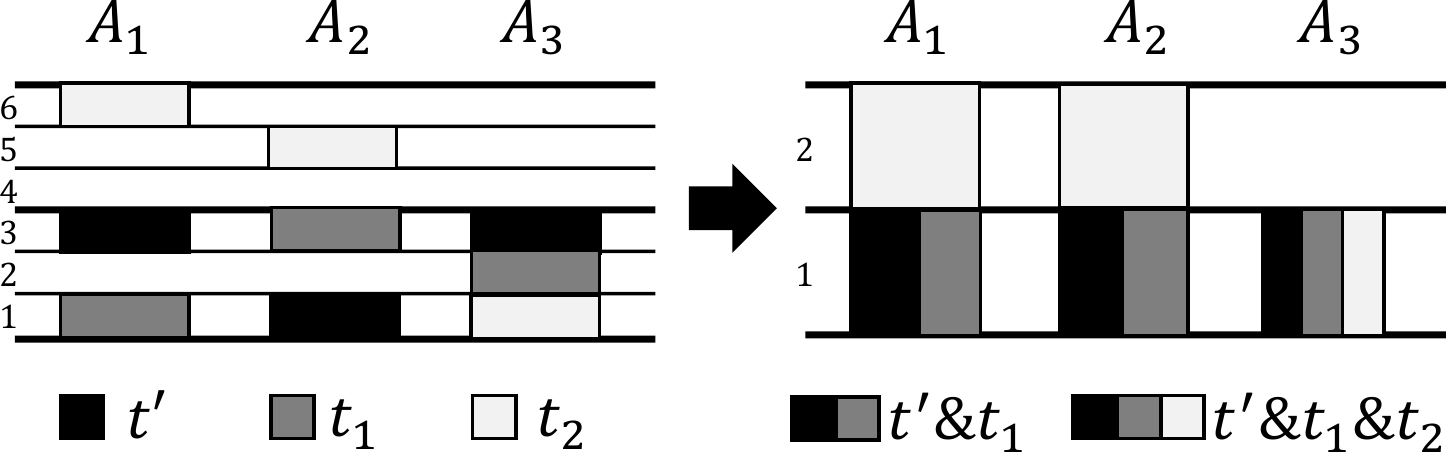}  
		\caption{The Effect of Denseness}
		\label{figex}
	\end{figure}
	
	\begin{corollary}
		\label{c_independence}
		If all attributes in $T(A_1,A_2,...,A_d)$ are independent of each other, then the expected error of the baseline algorithm is 
		\begin{displaymath}
		\overline{\varepsilon} \le \frac{n-m}{n(m+1)}H_{d-1,n} \le \frac{n-m}{n(m+1)}\sum_{i=0}^{d-1}\frac{(\log (m+1))^i}{i!}
		\end{displaymath}
		where $m$ is the sample size.
	\end{corollary}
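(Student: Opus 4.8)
The plan is to work directly with the expected-error formula from Theorems \ref{t_continuous_0} and \ref{t_discrete_0} rather than routing through the skyline-size bound of Corollary \ref{c_discrete_0}. In both the continuous and the discrete case those results can be written uniformly as $\overline{\varepsilon}=\frac{n-m}{n}\,E\big[(1-G(A))^{m}\big]$, where $G$ is the joint distribution function of a tuple and $A=(A_1,\dots,A_d)$ is a generic tuple of $T$. I deliberately avoid Corollary \ref{c_discrete_0} here, because exact duplicates inflate the strict-dominance skyline size $\mu_{m+1,d}$---an all-equal sample has skyline size $m+1$ yet error $0$---so the bound $\overline{\varepsilon}\le\frac{n-m}{n(m+1)}\mu_{m+1,d}$ is too weak to recover the harmonic estimate, and the error has to be controlled directly.

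First I would use attribute independence to factor the distribution function as $G(A)=\prod_{i=1}^{d}G_i(A_i)$, with $G_i$ the marginal distribution function of $A_i$, so that $\overline{\varepsilon}=\frac{n-m}{n}\,E\big[(1-\prod_{i=1}^{d}G_i(A_i))^{m}\big]$. The central step is a probability-integral-transform comparison: put $U_i:=G_i(A_i)$ and compare each $U_i$ with a uniform variable on $[0,1]$. For any distribution function one has $Pr(G_i(A_i)\le t)\le t$ for every $t\in[0,1]$, with equality exactly when $A_i$ is continuous, so $U_i$ stochastically dominates the uniform law on $[0,1]$. Intuitively, every atom of $G_i$---that is, every tie on attribute $A_i$---pushes the mass of $U_i$ upward, which is precisely the effect of denseness illustrated in Fig.~\ref{figex}.

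Next I would lift this coordinatewise dominance to the product. Since the $U_i$ are independent and $(x_1,\dots,x_d)\mapsto\prod_i x_i$ is nondecreasing on $[0,1]^d$, the product $\prod_i U_i$ stochastically dominates $\prod_i\widetilde U_i$ for i.i.d.\ uniform $\widetilde U_i$ (formally, couple each coordinate by Strassen's theorem so that $U_i\ge\widetilde U_i$ almost surely, then take the independent products). As $x\mapsto(1-x)^{m}$ is nonincreasing, $(1-\prod_i U_i)^{m}$ is in turn stochastically dominated by $(1-\prod_i\widetilde U_i)^{m}$, and taking expectations gives $E\big[(1-\prod_i G_i(A_i))^{m}\big]\le\int_{[0,1]^d}(1-\prod_i v_i)^{m}\,d\overline{V}$. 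By Corollary \ref{c_continuous_0} together with the preceding lemma on the $CI$ skyline size, this last integral equals the continuous-uniform value $\frac{\mu_{m+1,d}}{m+1}=\frac{H_{d-1,m+1}}{m+1}$.

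Assembling the pieces yields $\overline{\varepsilon}\le\frac{n-m}{n(m+1)}H_{d-1,m+1}$, and the second stated inequality then follows from the estimate $H_{d-1,m+1}\le\sum_{i=0}^{d-1}\frac{(\log(m+1))^{i}}{i!}$ recorded just before Corollary \ref{c_CI}. I expect the stochastic-dominance step to be the main obstacle: one must treat the atoms (ties) carefully when verifying $Pr(G_i(A_i)\le t)\le t$, and justify that independent products preserve stochastic order. Once this monotone comparison against the continuous-uniform $CI$ case is in hand, the remaining arithmetic simply mirrors the proof of Corollary \ref{c_CI}.
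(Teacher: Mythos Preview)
Your argument is correct. The paper reaches the same inequality by a more informal coupling: it views a dense (attribute-independent but non-sparse) relation as a ``binned'' version of an initially sparse $CI$ relation, observes that binning can only create new dominance or equality relations between a sample tuple and a target tuple, and concludes that $Pr(S\not\preceq t_i)$ can only drop relative to the $CI$ value in Corollary~\ref{c_CI}. Your probability-integral-transform route is the precise analytic version of this idea---the Strassen coupling you invoke is exactly the inverse of the paper's binning map, and the inequality $Pr(G_i(A_i)\le t)\le t$ is the quantitative statement that atoms on attribute $i$ push $G_i(A_i)$ upward. What your approach buys is rigor (the paper's ``both cases decrease it'' is asserted rather than verified) and a clean separation between the stochastic-dominance step and the integral evaluation; it also makes explicit that the intermediate constant is really $H_{d-1,m+1}$, which is what Corollary~\ref{c_CI} in fact delivers and what matches the final $\log(m+1)$ bound.
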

	
	\begin{proof}
	If distinct values condition in definition \ref{d_CI} doesn't hold, the relation $T$ may be dense, i.e. there are duplicate values of distinct tuples in $T$ on a single attribute. Denseness is equivalent to partitioning values of tuples in a initially sparse relation, into just a few bins (values) over each attribute. There are two cases. First, some tuples initially share no values and have no dominance relationship, but after binned, they do. Second, there may be duplicate tuples that have the same value on each attribute. The expected error of the baseline algorithm is $Pr(S\not\preceq t_i)=Pr(\not\exists t \in S, t\not\prec t'\ or\ t\ne t')$, and both cases decrease it. At this time, the error of the baseline algorithm must be no higher than the value in corollary \ref{c_CI}. 
	$\hfill\square$ \end{proof}

	In figure \ref{figex}, $t'$ in $S$ didn't dominate $t_2$ in $T$ before, but $t'$ dominates $t_2$ after binning. Moreover, $t'$ in $S$ and $t_1$ in $T$ were comparable before, but  $t'$ is equal to $t_1$ after binning. 
	
	\begin{corollary}
		If all attributes in $T(A_1,A_2,...,A_d)$ are independent of each other, with sample size $m$ equal to $n^{\frac{1}{k}}-1 (k> 1)$, then the expected error of the baseline algorithm is 
		\begin{displaymath}
		\overline{\varepsilon} \le \frac{n-n^{\frac{1}{k}}+1}{n}\sum_{i=0}^{d-1}\frac{(\log n)^i}{k^in^{\frac{1}{k}}i!}
		\end{displaymath}
		where $m$ is the sample size.
	\end{corollary}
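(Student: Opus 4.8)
The plan is to obtain this bound by a direct specialization of Corollary \ref{c_independence}, since the hypothesis here (all attributes of $T$ independent of each other) is exactly the hypothesis of that corollary. First I would recall its conclusion: under the independence assumption,
\begin{displaymath}
\overline{\varepsilon} \le \frac{n-m}{n(m+1)}\sum_{i=0}^{d-1}\frac{(\log (m+1))^i}{i!}.
\end{displaymath}
Nothing new needs to be proved about the error itself; the whole task is to rewrite each $m$-dependent factor in terms of $n$ and $k$ after setting $m+1 = n^{1/k}$, which is the content of the chosen sample size $m = n^{1/k}-1$.

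Next I would evaluate the three ingredients that depend on $m$. The numerator becomes $n-m = n - n^{1/k} + 1$. The factor $m+1$ in the denominator is exactly $n^{1/k}$. For the powers of the logarithm, I would use $\log(m+1) = \log\!\bigl(n^{1/k}\bigr) = (\log n)/k$, so that $(\log(m+1))^i = (\log n)^i / k^i$ for each $0 \le i \le d-1$.

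Finally I would assemble these pieces. Substituting them yields
\begin{displaymath}
\overline{\varepsilon} \le \frac{n-n^{1/k}+1}{n\, n^{1/k}}\sum_{i=0}^{d-1}\frac{(\log n)^i}{k^i\, i!},
\end{displaymath}
and then I would push the common factor $1/n^{1/k}$ into each summand to recover the stated form
\begin{displaymath}
\overline{\varepsilon} \le \frac{n-n^{1/k}+1}{n}\sum_{i=0}^{d-1}\frac{(\log n)^i}{k^i n^{1/k} i!}.
\end{displaymath}
There is no genuine obstacle in this argument: the corollary is a pure substitution into an already-established bound. The only points requiring care are propagating the $1/k$ correctly through every power of $\log n$ and keeping the $n^{1/k}$ factor (coming from $m+1$) in the denominator when redistributing it across the sum; getting these bookkeeping details right is what produces the exact constant shown.
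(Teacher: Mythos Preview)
Your proposal is correct and matches the paper's approach: the paper states this corollary immediately after Corollary~\ref{c_independence} without a separate proof, treating it as the straightforward specialization you describe. Your substitution $m+1=n^{1/k}$ and the bookkeeping of $(\log(m+1))^i=(\log n)^i/k^i$ are exactly what is needed.
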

	
	\subsubsection{Variance of The Error.}
	We assume that each tuple in $T$ is a $d$-dimensional \textit{i.i.d} random variable and $T$ satisfies component independence ($CI$). Without losing generality, all random variables are uniformly distributed over $[0,1]^d$. 
	
	\begin{theorem}
		\label{t_variance} 
		If the relation $T(A_1,A_2,...,A_d)$ with $n$ tuples satisfies $CI$, then $\sigma^2 = O(\frac{\mu_{m,d}}{m^2})$, and $\sigma = O(\frac{\sqrt{\mu_{m,d}}}{m}) = o(\overline{\varepsilon})$.
	\end{theorem}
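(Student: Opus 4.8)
The plan is to reduce the variance of $\varepsilon$ to the covariance of two ``survival'' indicators and then control that covariance by an integral that exhibits a decisive cancellation. By Theorem~\ref{t_CIdistribution} I may assume the tuples are i.i.d. uniform on $[0,1]^d$, so that $F(\overline V)=\prod_{k=1}^d v_k$. Keeping the indicators $X_i$ from the proof of Theorem~\ref{t_continuous_0} ($X_i=1$ iff $t_i$ is not dominated by the sample $S$), we have $\varepsilon=\frac1n\sum_i X_i$. Writing $p=\Pr(X_1=1)=\overline\varepsilon$ and $q=\Pr(X_1=1,X_2=1)$, the exchangeability of the tuples gives
\[
\sigma^2=\frac{p-q}{n}+(q-p^2).
\]
The first term is harmless: since $q\ge 0$ and, by Corollary~\ref{c_continuous_0}, $p=\overline\varepsilon=O(\mu_{m,d}/m)$, I get $\frac{p-q}{n}\le\frac{p}{n}=O\!\big(\frac{\mu_{m,d}}{mn}\big)=O\!\big(\frac{\mu_{m,d}}{m^2}\big)$ because $n\ge m$. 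Everything therefore rests on showing $q-p^2=O(\mu_{m,d}/m^2)$.

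Next I would write $q$ and $p^2$ as two-point integrals. Let $V_1,V_2$ be independent uniform points and set $A=\prod_k v_{1k}$, $B=\prod_k v_{2k}$, $C=\prod_k\min(v_{1k},v_{2k})$. Repeating the conditioning argument of Theorem~\ref{t_continuous_0} for the event that no sample tuple dominates either $t_1$ or $t_2$ yields $q=\frac{(n-m)(n-m-1)}{n(n-1)}\,\mathbb E\big[(1-A-B+C)^m\big]$ and $p=\frac{n-m}{n}\mathbb E[(1-A)^m]$, hence, by independence of $V_1,V_2$, $p^2=\frac{(n-m)^2}{n^2}\mathbb E\big[((1-A)(1-B))^m\big]$. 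Putting $W=(1-A)(1-B)$ and $D=C-AB$, the key algebraic facts are $1-A-B+C=W+D$ and $D\ge 0$ (because $\min(v_{1k},v_{2k})\ge v_{1k}v_{2k}$ forces $C\ge AB$). Separating the hypergeometric prefactors, $q-p^2$ splits into $\frac{(n-m)^2}{n^2}\big(\mathbb E[(W+D)^m]-\mathbb E[W^m]\big)$ plus a prefactor-difference term; the latter is $O(m/n^2)$ times $\mathbb E[(W+D)^m]\le(\mathbb E[(1-A)^{m/2}])^2=O(\mu_{m,d}^2/m^2)$, which is $O(\mu_{m,d}/m^2)$ since $\mu_{m,d}\le m\le n$.

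For the main term I would use $x^m-y^m\le m\,x^{m-1}(x-y)$ with $x=W+D$, $y=W$, together with $1-A-B+C\le\sqrt{(1-A)(1-B)}=\sqrt W$, to obtain
\[
\mathbb E[(W+D)^m]-\mathbb E[W^m]\le m\,\mathbb E\big[W^{(m-1)/2}C\big].
\]
The decisive step is the identity $C=\int_{[0,1]^d}\mathbf 1[\overline t<V_1]\,\mathbf 1[\overline t<V_2]\,d\overline t$, which after Fubini and the independence of $V_1,V_2$ turns the coupled expectation into the square of a single-point integral:
\[
\mathbb E\big[W^{(m-1)/2}C\big]=\int_{[0,1]^d} g(\overline t)^2\,d\overline t,\qquad g(\overline t)=\int_{\prod_k[t_k,1]}\Big(1-\prod_k v_k\Big)^{(m-1)/2}d\overline v.
\]
This is exactly the source of the cancellation: a crude bound on $\mathbb E[(W+D)^m]$ loses a full factor $\mu_{m,d}$, but squaring the decaying profile $g$ concentrates the mass and recovers the correct order.

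The hard part will be the asymptotic estimate $\int_{[0,1]^d}g(\overline t)^2\,d\overline t=O(\mu_{m,d}/m^3)$, which gives $m\,\mathbb E[W^{(m-1)/2}C]=O(\mu_{m,d}/m^2)$ and hence $q-p^2=O(\mu_{m,d}/m^2)$. I would control it by a layer-cake computation: $g(\overline 0)=\Theta(\mu_{m,d}/m)$ and, writing $\int g^2=\int_0^{\infty}2y\,\mathrm{vol}(g>y)\,dy$, the level sets $\{g>y\}$ have volume essentially $\frac1m e^{-my}$ times a degree-$(d-1)$ polynomial in $\log m$ whose growth matches $H_{d-1,\cdot}=\mu_{m,d}$; the exponential weight kills all but one power of the logarithm, leaving $\Theta(\mu_{m,d}/m^3)$. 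Making the profile of $g$ and its level-set volumes precise is where the harmonic-number asymptotics of \cite{buchta1989average,devroye1980note} enter. Combining the two bounds gives $\sigma^2=O(\mu_{m,d}/m^2)$, so $\sigma=O(\sqrt{\mu_{m,d}}/m)$; finally, since $\mu_{m,d}=H_{d-1,m}\to\infty$, the ratio $\sigma/\overline\varepsilon=O(1/\sqrt{\mu_{m,d}})\to0$, i.e. $\sigma=o(\overline\varepsilon)$.
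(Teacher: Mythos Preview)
Your reduction of $\sigma^2$ to the covariance $q-p^2$, the inclusion--exclusion expression $1-A-B+C$, the algebra $W+D$ with $D\ge 0$ and $W+D\le\sqrt W$, and the decoupling of $\mathbb E[W^{(m-1)/2}C]$ into $\int g^2$ via Fubini are all correct and form a genuinely different route from the paper's. The paper instead partitions the double integral for $\Pr(X_i=X_j=1)$ according to the sign pattern $(\eta)$ of the coordinatewise comparisons, recognises the nested cases $\eta\in\{0,d\}$ as the second--layer count $\mu_{m+2,d}(2)$, and for the crossing cases $1\le\eta\le d-1$ invokes the result of \cite{bai2008variance} that they sum to $\big(\mu_{m+2,d}^2+O(\mu_{m+2,d})\big)/\big((m+1)(m+2)\big)$; the $\mu^2$ term then cancels against $p^2$ and the remainder is $O(\mu_{m,d}/m^2)$.

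The gap in your argument is precisely the estimate $\int_{[0,1]^d} g(\overline t)^2\,d\overline t = O(\mu_{m,d}/m^3)$, which you assert but do not prove. The layer--cake sketch is not a proof: the stated level--set asymptotic $\mathrm{vol}(g>y)\sim \tfrac{1}{m}e^{-my}$ times a degree--$(d-1)$ polynomial in $\log m$ is wrong as written (at $y=0$ it gives $\mathrm{polylog}(m)/m$, not $1$), and no mechanism is given for why ``the exponential weight kills all but one power of the logarithm.'' Every elementary bound one tries here---$\|g\|_\infty\|g\|_1$, or $C\le\sqrt{AB}$ followed by independence---yields only $\int g^2=O(\mu_{m,d}^2/m^3)$, which after multiplication by $m$ gives $q-p^2=O(\mu_{m,d}^2/m^2)$ and hence merely $\sigma=O(\overline\varepsilon)$, not $o(\overline\varepsilon)$. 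Saving that extra factor $\mu_{m,d}$ is exactly the delicate cancellation the paper outsources to \cite{bai2008variance}; your decoupling has reformulated it as a statement about the $L^2$ norm of the survival profile $g$, but has not made it easier to establish. (For instance, the naive hope that $g(t)=O(1/m)$ once one coordinate of $t$ is bounded away from $0$ already fails for $d\ge 3$: at $t=(1/2,0,\dots,0)$ one has $g(t)\asymp (\log m)^{d-2}/m$.) Completing your route would require a genuine pointwise or level--set analysis of $g$ of comparable depth to the cited variance computation for maxima.
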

	\begin{proof}
		Let $X_i$ be a random variable for $1 \le i \le n$. $X_i=0$ if  $t_i$ in $T$ is dominated by the sample $S$, otherwise $X_i=1$. From the proof in theorem \ref{t_continuous_0}, we have 
		
		\begin{align*}
		\sigma^2 &= D(\varepsilon) = D(\frac{\sum_{i=1}^{n}X_i}{n})=D(\sum_{i=1}^{n}X_i)/n^2\\
		&=(E(\sum_{i=1}^{n}X^2_i)+E(\sum_{i\ne j}X_i X_j)-E^2(\sum_{i=1}^{n}X_i))/n^2\\
		&=\frac{1}{n}Pr(X_i=1)+\frac{n-1}{n}Pr_{i\ne j}(X_i=X_j=1) -{Pr}^2(X_i=1).
		\end{align*}
		
		Assume the $t_i$ in $T$ has the value $\overline{U}=(u_1,u_2,...,u_d)$ and the $j_{th}$ tuple $t_j$ has the value $\overline{V}=(v_1,v_2,...,v_d)$. Let
		$(\eta)$ be 
		\begin{displaymath}
		\{(\overline{U},\overline{V}) | \overline{U} \in [0,1]^d,\overline{V}\in [0,1]^d, u_1\le v_1, ..., u_{\eta}\le v_{\eta}, v_{\eta+1}< u_{\eta+1}, ..., v_d<u_d \}.
		\end{displaymath}
		$(\eta)$ represents the set of all possible $(\overline{U},\overline{V})$, in which $\overline{U}$ has values no more than $\overline{V}$ on the first $\eta$ attributes and has higher values on the subsequent attributes. Then we have 
		\begin{align*}
		&\quad Pr_{i\ne j}(X_i=X_j=1)\\
		&=\frac{(n-m)(n-1-m)}{n(n-1)}\sum_{\eta=0}^{d}\tbinom{d}{\eta} \int_{(\eta)} (1-\prod_{i=1}^{d}u_i-\prod_{i=1}^{d}v_i +\prod_{i=1}^{\eta}u_i\prod_{i=\eta+1}^{d}v_i)^m d\overline{U}d\overline{V}.
		\end{align*}
		In the above equation, $\frac{(n-m)(n-1-m)}{n(n-1)}$ is the probability that two distinct tuples both are not selected into the sample. Based on \cite{bai2008variance}, we have 
		\begin{displaymath}
		\sum_{\eta=1}^{d-1}\tbinom{d}{\eta} \int_{(\eta)} (1-\prod_{i=1}^{d}u_i-\prod_{i=1}^{d}v_i+\prod_{i=1}^{\eta}u_i\prod_{i=\eta+1}^{d}v_i)^m d\overline{U}d\overline{V}=\frac{\mu^2_{m+2,d}+O(\mu_{m+2,d})}{(m+1)(m+2)}.
		\end{displaymath}
		Thus,
		\begin{align}
		&\quad Pr_{i\ne j}(X_i=X_j=1)\notag\\
		&=\frac{(n-m)(n-1-m)}{n(n-1)} (2\int_{[0,1]^d}  (1-\prod_{i=1}^{d}v_i)^m \prod_{i=1}^{d}v_i d\overline{V} +  \frac{\mu^2_{m+2,d}+O(\mu_{m+2,d})}{(m+1)(m+2)}) \label{e_2}\\
		&=\frac{(n-m)(n-1-m)}{n(n-1)} \frac{2\mu_{m+2,d}(2)+\mu^2_{m+2,d}+O(\mu_{m+2,d})}{(m+1)(m+2)}. \label{e_3}
		\end{align}
		Equation (\ref{e_2}) is based on variable substitution. In (\ref{e_3}), $\mu_{n,d}(r)$ denotes the expected size of the $r_{th}$ layer skyline of $T$, where the $r^{th}$ layer skyline of $T$ is the set of tuples in $T$ that are dominated by exactly $r-1$ tuples in $T$, and its expected size is equal to 
		\begin{equation}
		\mu_{n,d}(r)=n\tbinom{n-1}{r-1} \int_{[0,1]^d}  (1-\prod_{i=1}^{d}v_i)^{n-r} (\prod_{i=1}^{d}v_i)^{r-1} d\overline{V}.\notag
		\end{equation}
		Due to $Pr(X_i=1)=\frac{n-m}{n} \frac{\mu_{m+1,d}}{m+1}$, we have 
		\begin{align}
		&\ \ \ \ D(\sum_{i=1}^{n}X_i)\notag\\
		&=(n-m)\frac{\mu_{m+1,d}}{m+1}-\frac{(n-m)(n+1)}{(m+1)^2(m+2)}\mu^2_{m+1,d} +\frac{(n-m)(n-1-m)}{(m+1)(m+2)}(\mu_{m+2,d}+ \notag\\
		&\quad \mu_{m+1,d})(\mu_{m+2,d}-\mu_{m+1,d}) + \frac {(n-m)(n-1-m)} {(m+1)(m+2)} (2\mu_{m+2,d}(2) + O(\mu_{m+2,d}))\notag\\
		&=(n-m)\frac{\mu_{m+1,d}}{m+1}-\frac{(n-m)(n+1)}{(m+1)^2(m+2)}\mu^2_{m+1,d} + \frac{(n-m)(n-1-m)}{(m+1)(m+2)}O(\mu_{m+2,d}). \label{e_4}
		\end{align}
		Equation (\ref{e_4}) holds because $\mu_{m+2,d}-\mu_{m+1,d}\le 1$ and $\mu_{m+2,d}(2)\le\mu_{m+2,d}$. With $\mu_{m+1,d} \le m+1$, it is true that $D(\sum_{i=1}^{n}X_i) = O(\frac{n^2}{m^2}\mu_{m,d})$. $\hfill\square$
	\end{proof}
	
	\subsection{Analysis of The Time Complexity}
	\begin{theorem}
		If getSkyline in step 2 is based on FLET \cite{bentley1993fast}, then the time complexity of the baseline algorithm is $O(m\log^{d-2}m)$ in the worst case, and $O(m)$ in the average case. 
	\end{theorem}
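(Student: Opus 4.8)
The plan is to decompose the running time of the baseline algorithm into the costs of its two steps---drawing the sample $S$ in step~1 and computing $getSkyline(S)$ via FLET in step~2---and then to bound each separately before summing.

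First I would argue that step~1 runs in $O(m)$ time. Assuming the relation $T$ supports random access, drawing a sample of size $m$ reduces to generating $m$ random tuple positions and retrieving the corresponding tuples. Generating each index costs $O(1)$, and in the sampling-without-replacement case used in the error analysis, a standard random-access scheme (e.g.\ maintaining a hash set of chosen positions and redrawing on collision) keeps the total expected work at $O(m)$: since $m = o(n)$, each draw collides with probability at most $(m-1)/n < 1/2$, so the expected number of redraws per successful draw is $O(1)$. Hence step~1 contributes $O(m)$, and crucially this cost is independent of $n$, which is exactly what enables the sublinear claim of Section~\ref{sec_intro} when $m = n^{1/k}$.

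Next I would invoke the known complexity of FLET. FLET is the Fast Linear Expected Time algorithm of Bentley et al.~\cite{bentley1993fast} for computing the set of maximal elements, which under the sign convention of Definition~\ref{d_skyline} coincides with the skyline. Its worst-case running time on an input of $m$ points in $d$ dimensions is $O(m\log^{d-2}m)$, matching the divide-and-conquer bound of the classical maximal-set algorithms, while its expected running time is $O(m)$ under the standard input assumptions (in particular the \textit{i.i.d.}\ / component-independence assumptions already used in the error analysis above). Applying this result with $S$, which has exactly $m$ tuples, as the input bounds step~2 by $O(m\log^{d-2}m)$ in the worst case and $O(m)$ on average. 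Combining the two steps, the total running time is $O(m)+O(m\log^{d-2}m)=O(m\log^{d-2}m)$ in the worst case and $O(m)+O(m)=O(m)$ in the average case, which is the claimed bound.

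The main obstacle I anticipate is not the skyline computation itself---that is a direct citation of~\cite{bentley1993fast}---but rather justifying that step~1 truly costs $O(m)$ rather than $O(n)$. This hinges on the random-access assumption on $T$ together with the collision analysis for sampling without replacement, and it is the only place where one must be careful, since a naive implementation (such as a full reservoir scan) would cost $\Omega(n)$ and silently destroy the sublinearity that motivates the entire algorithm.
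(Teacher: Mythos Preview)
Your proposal is correct and follows essentially the same approach as the paper: decompose into step~1 (sampling, $O(m)$) and step~2 (FLET on $m$ points, $O(m\log^{d-2}m)$ worst case and $O(m)$ average), then sum. The paper's own proof is terser---it simply asserts that step~1 needs $O(m)$ time without your random-access/collision discussion---so your added justification for the sampling cost is a welcome elaboration rather than a departure.
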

	\begin{proof}
		Since the time complexity of FLET \cite{bentley1993fast} is $O(n{\log}^{d-2}n)$ in the worst case, and $O(n)$ in the average case, step 2 of the algorithm needs $O(m\log^{d-2}m)$ time. Thus, the time complexity of the algorithm is $O(m\log^{d-2}m)$ because that step 1 of the algorithm needs $O(m)$ time.  $\hfill\square$
	\end{proof}
	\begin{corollary}
		\label{c_result}
		If sample size $m$ equal to $n^{\frac{1}{k}}(k>1)$, then the running time of the baseline algorithm is $O(n^{\frac{1}{k}} \log^{d-2}n^{\frac{1}{k}})$ in the worst case, and $O(n^\frac{1}{k})$ in the average case. 
	\end{corollary}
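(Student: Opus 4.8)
The plan is to apply the preceding time-complexity theorem as a black box and simply specialize the sample size. That theorem guarantees that, when getSkyline is instantiated with FLET, the baseline algorithm runs in $O(m\log^{d-2}m)$ time in the worst case and $O(m)$ time in the average case, where $m$ is the sample size. Since the corollary only fixes a particular value of $m$, the whole argument reduces to substituting $m = n^{1/k}$ into these two expressions.

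First I would substitute $m = n^{1/k}$ into the worst-case bound $O(m\log^{d-2}m)$, which yields $O(n^{1/k}\log^{d-2}n^{1/k})$ verbatim, matching the statement. If a cleaner form is preferred, one can note that $\log n^{1/k} = \frac{1}{k}\log n$, so that $\log^{d-2}n^{1/k} = k^{-(d-2)}(\log n)^{d-2}$; because $k>1$ is a fixed constant, the factor $k^{-(d-2)}$ is absorbed into the big-$O$ and the worst-case running time is equivalently $O(n^{1/k}(\log n)^{d-2})$. Next I would substitute $m = n^{1/k}$ into the average-case bound $O(m)$, giving $O(n^{1/k})$ directly.

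There is essentially no obstacle here: the corollary is a pure specialization of the time-complexity theorem, and no new estimate is required beyond the observation that $d$ and $k$ are treated as constants, so the change of base of the logarithm does not affect the asymptotic order. The only point worth stating explicitly is that the cost of drawing the sample in step~1, namely $O(m)=O(n^{1/k})$, is dominated by (or at worst matches) the cost of getSkyline in step~2, so it changes neither bound.
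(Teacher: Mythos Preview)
Your proposal is correct and matches the paper's approach: the paper gives no separate proof for this corollary at all, treating it as an immediate substitution of $m=n^{1/k}$ into the preceding time-complexity theorem, which is exactly what you do.
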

	Corollary \ref{c_result} tells that the baseline algorithm is in sublinear time if the sample size $m$ is equal to $n^{\frac{1}{k}}(k>1)$.

	\section{DOUBLE and Analysis}
	\label{sec_double}
	In this section, we devise a sampling-based algorithm, DOUBLE, to return an $(\epsilon,\delta)$-approximation efficiently for the exact skyline of the given relation $T$. It first draws an initial sample of size $s_I$ (line 1). The value of $s_I$ can be set to any positive integer. Afterwards, DOUBLE computes the approximate skyline result on the sample (line 2), and then verifies the error $\varepsilon$ of the current result $\widetilde{Sky}$ (lines 3-4). If it is guaranteed that $Pr(\varepsilon\le \epsilon)$ is at least $1-\delta$, then DOUBLE terminates (line 8). Otherwise, it doubles the sample size and repeats the above process (lines 5-7). 

	\begin{algorithm}[htb]
		\SetKwBlock{DoWhile}{Do}{end}
		\caption{DOUBLE}
		\KwIn{\\
			\qquad $T$: the input relation with $n$ tuples and $d$ attributes;\\
			\qquad $\epsilon$: the error bound;\\
			\qquad $\delta$: the error probability;}
		\KwOut{\\
			\qquad an ($\epsilon,\delta$)-approximation $\widetilde{Sky}$ of the skyline of $T$;}
		$m=s_I$, and $S[1,...,m]$ is the sample of $m$ tuples\;
		$\widetilde{Sky}$ = getSkyline($S[1,...,m]$)\;
		$\hat{\varepsilon}$=verifyError($\widetilde{Sky}$)\;
		\textbf{While} ({$\hat{\varepsilon}>\frac{2\epsilon}{3}$})
		\DoWhile{
			$m=2m$, and $S[\frac{m}{2}+1,...,m]$ is the sample of $m/2$ tuples\;
			$\widetilde{Sky}$ = mergeSkyline($\widetilde{Sky}$, getSkyline($S[\frac{m}{2}+1,...,m]$))\;
			$\hat{\varepsilon}$=verifyError($\widetilde{Sky}$)\;
		}
		\Return $\widetilde{Sky}$\;
	\end{algorithm}

	\begin{algorithm}[htb]
		\SetKwProg{verifyError}{verifyError}{}{end}
		\NoCaptionOfAlgo
		\caption{\textbf{Subroutines 1}}
		\setcounter{AlgoLine}{0}
		\verifyError{($\widetilde{Sky}$)}{
			$count=0$, and $s_v$ = $\lceil \frac{18(\ln\log_2 n+\ln(\frac{1}{\delta}))}{\epsilon} \rceil$\;
			$V$ is the sample of $s_v$ tuples\;
			\For{each tuple $t$ in $V$}
			{
				\If{$t$ is not dominated by $\widetilde{Sky}$}{
					count+=1\;
				}
			}
			\Return $count/s_v$\;
		}
	\end{algorithm}
	
	DOUBLE judges whether $\varepsilon$ meets the requirement by $Monte$ $Carlo$ $method$. In the subroutine verifyError, $s_v$ is the sample size for each verification, and is equal to $\lceil \frac{18(\ln\log_2 n+\ln(\frac{1}{\delta}))}{\epsilon} \rceil$ (line 2). DOUBLE first obtains a random sample $V$ of size $s_v$ (line 3). Then it counts and returns the proportion of tuples in $V$ not dominated by the approximate result $\widetilde{Sky}$ (lines 4-7), which is denoted by $\hat{\varepsilon}$. If $\hat{\varepsilon}\le \frac{2\epsilon}{3}$, it is guaranteed that the error of $\widetilde{Sky}$ is not higher than the error bound $\epsilon$ with a probability no less than $1-\delta$. In the following, we prove the above in detail.
	
	\subsection{Error Analysis of DOUBLE}
	
	Let $q$ be the total number of times to invoke verifyError. For $1\le j\le q$, $m_j$ (respectively, $\varepsilon_j$) denotes the value of $m$ (respectively, $\varepsilon$) when verfyError is being invoked for the $j^{th}$ time. $\widetilde{Sky}_j$ is defined in a similar way. $\hat{\varepsilon}_j$ is the value returned by the $j^{th}$ invocation of verifyError. Then we have the following theorem.
	
	\begin{theorem}
		\label{t_incre1}
		For the $j^{th}$ invocation of verifyError, if $\varepsilon_j>\epsilon$, then $Pr(\hat{\varepsilon}_j\le \frac{2}{3}\epsilon)<\frac{\delta}{\log_2 n}$.
	\end{theorem}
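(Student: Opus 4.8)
The plan is to view \texttt{verifyError} as a Monte Carlo estimator of the true error $\varepsilon_j$ and to control its lower tail with a multiplicative Chernoff bound. First I would condition on the approximate result $\widetilde{Sky}_j$ produced by the first $m_j$ samples, so that $\varepsilon_j$ becomes a fixed quantity. By the definition of the error used throughout the paper, $\varepsilon_j$ equals the fraction of tuples of $T$ that are not dominated by $\widetilde{Sky}_j$; hence each of the $s_v$ tuples drawn (independently of $\widetilde{Sky}_j$) in the verification sample $V$ fails to be dominated with probability exactly $\varepsilon_j$. Writing $Z_k$ for the indicator that the $k$-th verification tuple is not dominated by $\widetilde{Sky}_j$, the $Z_k$ are i.i.d. Bernoulli$(\varepsilon_j)$, and $\hat{\varepsilon}_j=\frac{1}{s_v}\sum_{k=1}^{s_v}Z_k$ is their empirical mean with $E[\hat{\varepsilon}_j]=\varepsilon_j$.

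Next I would exploit the hypothesis $\varepsilon_j>\epsilon$ to turn $\{\hat{\varepsilon}_j\le \tfrac{2}{3}\epsilon\}$ into a genuine lower-tail deviation below the mean. Setting $S=\sum_k Z_k$ and $\mu=E[S]=\varepsilon_j s_v$, the target threshold satisfies $\tfrac{2}{3}\epsilon\, s_v=(1-\theta)\mu$ with $\theta=1-\frac{2\epsilon}{3\varepsilon_j}$. Because $\varepsilon_j>\epsilon$ gives $\frac{\epsilon}{\varepsilon_j}<1$, we obtain $\theta>\tfrac{1}{3}$, so the event is a deviation of relative magnitude exceeding $1/3$ below the mean. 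Applying the multiplicative Chernoff lower-tail inequality $Pr(S\le(1-\theta)\mu)\le \exp(-\mu\theta^2/2)$ then yields $Pr(\hat{\varepsilon}_j\le\tfrac{2}{3}\epsilon)\le \exp(-\tfrac{1}{2}\varepsilon_j s_v\theta^2)$.

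Finally I would bound the exponent from below using $\varepsilon_j>\epsilon$ and $\theta^2>\tfrac{1}{9}$, which together give $\tfrac{1}{2}\varepsilon_j s_v\theta^2>\frac{\epsilon s_v}{18}$. Substituting the prescribed sample size $s_v\ge\frac{18(\ln\log_2 n+\ln(1/\delta))}{\epsilon}$ makes the exponent at least $\ln\log_2 n+\ln(1/\delta)=\ln\frac{\log_2 n}{\delta}$, so that $\exp(-\tfrac{1}{2}\varepsilon_j s_v\theta^2)<\frac{\delta}{\log_2 n}$, the strict inequality coming from the strict bounds $\varepsilon_j>\epsilon$ and $\theta^2>1/9$. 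This is exactly the claimed bound.

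The hard part will not be the concentration computation but getting the probabilistic model right: I must argue that the verification sample $V$ is drawn independently of the sample that built $\widetilde{Sky}_j$, so that after conditioning on $\widetilde{Sky}_j$ the indicators $Z_k$ really are i.i.d. Bernoulli$(\varepsilon_j)$. A secondary point is the sampling scheme: if $V$ is drawn without replacement the $Z_k$ are not independent, but the standard fact that sampling without replacement concentrates at least as tightly as the corresponding binomial (Hoeffding--Serfling) lets the same Chernoff bound carry over, so the conclusion is unaffected.
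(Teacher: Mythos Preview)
Your proposal is correct and follows essentially the same route as the paper: model $\hat{\varepsilon}_j$ as the mean of $s_v$ Bernoulli$(\varepsilon_j)$ indicators, apply the multiplicative Chernoff lower-tail bound, and use $\varepsilon_j>\epsilon$ together with the choice of $s_v$ to make the exponent at least $\ln(\log_2 n/\delta)$. The paper's write-up is slightly terser---it bounds $Pr(\hat{\varepsilon}_j\le\tfrac{2}{3}\epsilon)\le Pr(\hat{\varepsilon}_j\le\tfrac{2}{3}\varepsilon_j)$ first and applies Chernoff with the fixed $\theta=1/3$, whereas you take a variable $\theta>1/3$---but the resulting bound $e^{-s_v\varepsilon_j/18}$ is identical, and your explicit remarks on conditioning on $\widetilde{Sky}_j$ and on sampling without replacement are points the paper leaves implicit.
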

	\begin{proof}
		Let $X_i$ be a random variable for $1\le i\le s_v$. For the $j^{th}$ invocation of verifyError, $X_i=0$ if tuple $t_i$ in $V$ is dominated by the approximate result $\widetilde{Sky}_j$, otherwise $X_i=1$. Obviously, $\hat{\varepsilon}_j$ is equal to $\frac{1}{s_v}\sum_{i=1}^{s_v}X_i$. According to the definition of $\varepsilon$, $E(\hat{\varepsilon}_j) = Pr(X_i=1)=\varepsilon_j$. 
		
		By the Chernoff bound, we have 
		\begin{displaymath}
		Pr(\hat{\varepsilon}_j \le \frac{2}{3}\varepsilon_j)\le \frac{1}{e^{s_v\varepsilon_j/18}}.
		\end{displaymath}
		With $\varepsilon_j> \epsilon$ and $s_v\ge \frac{18(\ln\log_2 n+\ln(\frac{1}{\delta}))}{\epsilon}$, we get the theorem.
		$\hfill\square$
	\end{proof}
	
	$\widetilde{Sky}_q$ is the final result returned by DOUBLE. Next, we show that $\widetilde{Sky}_q$ is an ($\epsilon,\delta$)-approximation of the exact skyline.
	
	\begin{corollary}
		If DOUBLE terminates normally, it returns an ($\epsilon,\delta$)-approximation $\widetilde{Sky}_q$, i.e. the error $\varepsilon_q$ of $\widetilde{Sky}_q$ satisfies 
		\begin{displaymath}
		Pr(\varepsilon_q \le \epsilon) =  Pr(|\frac{\mathcal{DN}(\widetilde{Sky}_q)-\mathcal{DN}(Sky)}{\mathcal{DN}(Sky)}|\le \epsilon) \ge 1-\delta.
		\end{displaymath}
	\end{corollary}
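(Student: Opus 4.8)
The plan is to bound the failure probability $Pr(\varepsilon_q > \epsilon)$ directly and show it is at most $\delta$. The crucial structural observation is that DOUBLE stops at round $q$ only because verifyError returned $\hat{\varepsilon}_q \le \frac{2}{3}\epsilon$ (this is the negation of the loop guard $\hat{\varepsilon} > \frac{2}{3}\epsilon$). Hence the returned result is bad exactly when a \emph{false acceptance} occurs at the stopping round, i.e. the true error satisfies $\varepsilon_q > \epsilon$ while the empirical estimate satisfies $\hat{\varepsilon}_q \le \frac{2}{3}\epsilon$. So I would first reduce the claim to controlling the probability of such a false acceptance.

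Because the stopping round $q$ is itself a random variable, I would not argue about round $q$ in isolation; instead I would union-bound over every round the algorithm could possibly reach. First I would note that the number of doubling rounds is at most $\log_2 n$: the sample size starts at $s_I\ge 1$ and doubles each iteration, so after at most $\log_2 n$ rounds it reaches $n$, at which point the sample equals $T$, giving $\varepsilon=0$ and forcing termination. For each round $j \in \{1,\dots,\log_2 n\}$, let $B_j$ denote the false-acceptance event $\{\varepsilon_j > \epsilon\} \cap \{\hat{\varepsilon}_j \le \frac{2}{3}\epsilon\}$. Then $\{\varepsilon_q > \epsilon\} \subseteq \bigcup_{j=1}^{\log_2 n} B_j$, since a bad output at the stopping round $q$ is precisely the occurrence of $B_q$ for some $q$ in this range, and the union bound gives $Pr(\varepsilon_q > \epsilon) \le \sum_{j=1}^{\log_2 n} Pr(B_j)$.

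To bound each $Pr(B_j)$ I would invoke Theorem~\ref{t_incre1}. Since the verification sample $V$ drawn inside verifyError is independent of the construction sample $S$ that determines $\widetilde{Sky}_j$ and hence $\varepsilon_j$, I can condition on the value of $\varepsilon_j$. On the event $\varepsilon_j > \epsilon$, Theorem~\ref{t_incre1} yields $Pr(\hat{\varepsilon}_j \le \frac{2}{3}\epsilon \mid \varepsilon_j) < \frac{\delta}{\log_2 n}$, so by the law of total probability $Pr(B_j) < \frac{\delta}{\log_2 n}\,Pr(\varepsilon_j > \epsilon) \le \frac{\delta}{\log_2 n}$. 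Summing over the at most $\log_2 n$ rounds gives $Pr(\varepsilon_q > \epsilon) < \delta$, which is exactly $Pr(\varepsilon_q \le \epsilon) \ge 1-\delta$, the desired $(\epsilon,\delta)$-guarantee.

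The main obstacle is the random stopping time $q$ together with the dependence between successive rounds (the sample is reused and extended across iterations via mergeSkyline). The resolution, and the reason the whole argument goes through, is that the per-round failure bound in Theorem~\ref{t_incre1} was calibrated to $\frac{\delta}{\log_2 n}$ precisely so that a union bound over all $\log_2 n$ possible rounds telescopes back to $\delta$; this is also why the verification sample size $s_v$ carries the $\ln\log_2 n$ term. Once the union bound is set up over all rounds rather than over the random round $q$, no further handling of the inter-round dependencies is needed, because the per-round bound of Theorem~\ref{t_incre1} holds for every fixed $j$ regardless of the outcomes of the other rounds.
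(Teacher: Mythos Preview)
Your argument is correct and mirrors the paper's proof: both bound the number of rounds by $\log_2 n$, apply Theorem~\ref{t_incre1} to bound the false-acceptance probability at each round by $\frac{\delta}{\log_2 n}$, and then aggregate over all rounds. The only cosmetic difference is that the paper writes the success probability as a product $(1-\frac{\delta}{\log_2 n})^{\log_2 n} > 1-\delta$ whereas you use a direct union bound over the events $B_j$; these are equivalent here, and your formulation is arguably cleaner since it does not rely on independence of the verification samples across rounds.
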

	\begin{proof}
		DOUBLE finally returns an ($\epsilon,\delta$)-approximation $\widetilde{Sky}_q$, if and only if, for any positive integer $j<q$, the $j^{th}$ invocation of verifyError with the error $\varepsilon_j>\epsilon$ must return an estimated value $\hat{\varepsilon}_j>\frac{2\epsilon}{3}$. The number of times to invoke verifyError is at most $\log_2n$. Based on theorem \ref{t_incre1}, the probability in this corollary is at least $(1-\frac{\delta}{\log_2n})^{\log_2n}> 1-\delta$. 
		$\hfill\square$
	\end{proof}
	
	\subsection{Analysis of Sample Size and Time Complexity}
	$m_q$ is the final value of $m$. Assume $\mathcal{M}_{\epsilon,\delta}$ is the size of sample required by the baseline algorithm running on $T$ to return an ($\epsilon,\delta$)-approximation. Based on analysis in section \ref{sec_baseline}, $\mathcal{M}_{\epsilon,\delta}$ and $m_q$ are almost unaffected by the relation size $n$. Here we analyze the relationship between $\mathcal{M}_{\epsilon,\delta}$ and $m_q$.
	
	\begin{theorem}
		\label{t_incre2}
		If $\delta\le 1/8$, the expected value of $m_q$ is $O(\mathcal{M}_{\frac{\epsilon}{3},\delta})$. 
	\end{theorem}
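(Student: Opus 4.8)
The plan is to track the sample size $m_j = 2^{j-1}s_I$ at the $j^{\text{th}}$ invocation of verifyError and to show that in expectation the doubling halts within a constant factor of $\mathcal{M}_{\frac{\epsilon}{3},\delta}$. Write $M^{*}=\mathcal{M}_{\frac{\epsilon}{3},\delta}$ and let $j^{*}$ be the least index with $m_{j^{*}}\ge M^{*}$, so $m_{j^{*}}<2M^{*}$. The key structural fact I would exploit is monotonicity: since line 6 merges the new sample's skyline into $\widetilde{Sky}$, the running result $\widetilde{Sky}_j$ is exactly the skyline of the accumulated sample $S[1,\dots,m_j]$, and enlarging a sample can only enlarge the set of tuples it dominates. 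Hence the true errors satisfy $\varepsilon_1\ge\varepsilon_2\ge\cdots$, and once $\varepsilon_j$ drops to $\le\frac{\epsilon}{3}$ it stays there.

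First I would establish the one-sided verification estimate complementing Theorem \ref{t_incre1}: if $\varepsilon_j\le\frac{\epsilon}{3}$ then $E(\hat{\varepsilon}_j)=\varepsilon_j\le\frac{\epsilon}{3}$, so $\hat{\varepsilon}_j>\frac{2\epsilon}{3}$ is a deviation to at least twice the mean, and the same Chernoff computation applied to the upper tail with $s_v\ge\frac{18(\ln\log_2 n+\ln(1/\delta))}{\epsilon}$ gives $Pr(\hat{\varepsilon}_j>\frac{2\epsilon}{3}\mid \varepsilon_j\le\frac{\epsilon}{3})\le\rho$ with $\rho\le\frac{\delta^{2}}{(\log_2 n)^{2}}$. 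Let $\tau$ be the first stage with $\varepsilon_\tau\le\frac{\epsilon}{3}$. Conditioning on the nested sample, which fixes every $\varepsilon_j$, and using that the verification samples are drawn independently at each call, the per-stage bounds multiply over stages $i\ge\tau$, so $Pr(q\ge j\mid\text{nested sample})\le\rho^{(j-\tau)^{+}}$. An Abel summation over the doubling sequence then collapses $E(m_q)$ to $O(E(m_\tau))$; the geometric factor $\sum_{k\ge 1}(2\rho)^{k}$ converges because $\rho$ is tiny, and $\delta\le\frac{1}{8}$ keeps every per-stage failure probability well below the threshold the geometric sums require.

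It remains to prove $E(m_\tau)=O(M^{*})$, i.e. that the sample size at which the error first reaches $\frac{\epsilon}{3}$ is on average within a constant factor of $M^{*}$. Expanding $E(m_\tau)$ along the doubling sequence reduces this to bounding $\sum_{s} m_s\,Pr(\varepsilon_{m_s}>\frac{\epsilon}{3})$. The head, over $m_s<M^{*}$, is at most $\sum_{m_s<M^{*}}m_s<2M^{*}$ by geometric growth. For the tail $m_s\ge M^{*}$ I would combine the defining property $Pr(\varepsilon_{M^{*}}>\frac{\epsilon}{3})\le\delta$ with the variance estimate $\sigma^{2}=O(\mu_{m,d}/m^{2})$ of Theorem \ref{t_variance} and the mean formula of Corollary \ref{c_continuous_0}. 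Once $m_s$ exceeds a small multiple of $M^{*}$ the mean $\overline{\varepsilon}_{m_s}$ has fallen below $\frac{\epsilon}{6}$, so Chebyshev gives $Pr(\varepsilon_{m_s}>\frac{\epsilon}{3})\le\sigma^{2}/(\frac{\epsilon}{6})^{2}=O(\sigma^{2}/\epsilon^{2})$; since $m_s\sigma^{2}=O(\mu_{m_s,d}/m_s)=O(\overline{\varepsilon}_{m_s})$ and $\overline{\varepsilon}_{m_s}$ halves up to a polylogarithmic factor at each doubling, the tail telescopes to $O(\overline{\varepsilon}_{M^{*}}/\epsilon^{2})=O(1/\epsilon)=O(M^{*})$, using $\overline{\varepsilon}_{M^{*}}=O(\epsilon)$ and $M^{*}=\Omega(1/\epsilon)$. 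Adding head and tail yields $E(m_\tau)=O(M^{*})$, whence $E(m_q)=O(\mathcal{M}_{\frac{\epsilon}{3},\delta})$.

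The main obstacle is exactly this tail control. The naive estimate, that with probability at most $\delta$ the early samples are unlucky and the run may continue to $m_q\le n$, only gives $O(n\delta)$, which is worthless because $M^{*}=o(n)$. The crux is therefore to show that $Pr(\varepsilon_m>\frac{\epsilon}{3})$ decays fast enough in $m$ to defeat the geometric growth of the sample size, and the only leverage for this is the concentration of $\varepsilon_m$ furnished by Theorem \ref{t_variance}. The other delicate point is keeping the bookkeeping of the two independent randomness sources straight — the nested sample that pins down each $\varepsilon_j$ versus the fresh verification sample at each call — which is precisely what makes the reduction to $E(m_\tau)$ clean.
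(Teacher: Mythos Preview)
Your argument is workable, but it takes a substantially longer route than the paper does, and in doing so imports an assumption (component independence, via Theorem~\ref{t_variance}) that the paper's proof does not need.

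The paper's proof is much shorter. It fixes the first stage $j$ with $m_j\in[\mathcal{M}_{\frac{\epsilon}{3},\delta},2\mathcal{M}_{\frac{\epsilon}{3},\delta})$ and argues that at \emph{every} stage $k\ge j$ the conditional probability of failing to terminate is at most $2\delta$; this gives $Pr(q>j+i)\le(2\delta)^{i}$ and the geometric series $\sum_{i\ge 1}2^{i}\mathcal{M}_{\frac{\epsilon}{3},\delta}(2\delta)^{i-1}$ immediately. The point you miss is \emph{why} the $2\delta$ bound holds conditionally on having survived to stage $k$: the chunk $S[\frac{m_k}{2}+1,\dots,m_k]$ added at stage $k$ is a \emph{fresh} sample of size $m_{k-1}\ge m_j\ge \mathcal{M}_{\frac{\epsilon}{3},\delta}$, independent of everything that happened before, so by the definition of $\mathcal{M}_{\frac{\epsilon}{3},\delta}$ its error alone is $\le\frac{\epsilon}{3}$ with probability $\ge 1-\delta$; since the accumulated error $\varepsilon_k$ is only smaller, $Pr(\varepsilon_k>\frac{\epsilon}{3}\mid q\ge k)\le\delta$. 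Combined with the fresh verification sample, the per-stage non-termination probability is $\le 2\delta$ conditionally, and these stack multiplicatively.

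So the tail control you call ``the main obstacle'' is handled in the paper not by concentration of $\varepsilon_m$ at a fixed $m$, but by independence of the successive sample chunks --- a device that is both simpler and distribution-free. Your detour through Chebyshev and Theorem~\ref{t_variance} is valid under CI but unnecessary, and it leaves the theorem weaker than stated since Theorem~\ref{t_incre2} carries no CI hypothesis. Your separation of the two randomness sources and the reduction via $\tau$ are clean and would be the right scaffolding if one wanted to make the paper's terse ``similarly'' fully explicit; you just reached for the wrong tool to finish it.
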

	\begin{proof}
		If $m_q$ is less than $\mathcal{M}_{\frac{\epsilon}{3},\delta}$, then the theorem holds. Otherwise, for the $j^{th}$ invocation of verifyError with $\mathcal{M}_{\frac{\epsilon}{3},\delta} \le m_j< 2\mathcal{M}_{\frac{\epsilon}{3},\delta}$, the error $\varepsilon_j$ of $\widetilde{Sky}_j$ satisfies $Pr(\varepsilon_j \le \frac{\epsilon}{3})\ge 1-\delta$. Under the condition $\varepsilon_j\le \frac{\epsilon}{3}$, the probability of $\hat{\varepsilon}_j\le \frac{2\epsilon}{3}$ is at least $1-(\frac{\delta}{\log_2 n})^2$. Thus the probability of $q\le j$ is at least $(1-\delta)(1-(\frac{\delta}{\log_2 n})^2)>1-2\delta$. Similarly, for any $k>j$, the probability of $q>k$ is less than  $2\delta$. Thus the expected value of $m_q$ is at most 
		\begin{displaymath}
		2\mathcal{M}_{\frac{\epsilon}{3},\delta} + 2^2\mathcal{M}_{\frac{\epsilon}{3},\delta}\times 2\delta +... +2^i\mathcal{M}_{\frac{\epsilon}{3},\delta}\times (2\delta)^{i-1}+...
		\end{displaymath}
		With $\delta\le 1/8$, it is $O(\mathcal{M}_{\frac{\epsilon}{3},\delta})$. 
		$\hfill\square$
	\end{proof}
	
	Based on analysis in section \ref{sec_baseline}, $\mathcal{M}_{\frac{\epsilon}{3},\delta}$ is up-bounded by  $O(\mathcal{M}_{\epsilon,\delta})$ in most cases. Thus, the sample used by DOUBLE has the same order of magnitude as the baseline algorithm. Hereafter, we analyze the time complexity of DOUBLE on $m_q$.
	
	\begin{theorem}
		\label{t_incre3}
		If getSkyine and mergeSkyline are based on FLET \cite{bentley1993fast}, then the time complexity of DOUBLE is $O(m_q\log^{d-1} m_q+(\ln\log n+\ln\frac{1}{\delta})m_q\log m_q)$, where $m_q$ is the final value of $m$ in DOUBLE.
	\end{theorem}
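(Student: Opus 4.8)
The plan is to sum, over all iterations of the main loop of DOUBLE, the costs of the three operations performed in each pass --- getSkyline on the freshly drawn batch, mergeSkyline of the running result with that batch, and verifyError --- and to collect the total into two groups matching the two terms of the claimed bound. The first thing I would establish is that the loop runs only $O(\log m_q)$ times: since $m$ starts at $s_I$ and is doubled on each pass until it reaches its final value $m_q$, the number of passes is $\log_2(m_q/s_I)+1 = O(\log m_q)$. This is the origin of every factor $\log m_q$ in the bound, and it must be kept distinct from the $\log n$ that enters through $s_v$.

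Next I would bound the skyline work in a single pass. When the sample size is $m_i$, getSkyline runs FLET on a batch of at most $m_i/2$ tuples, while mergeSkyline combines the running skyline (of size at most $m_{i-1}=m_i/2$) with the batch skyline, whose union has at most $m_i$ tuples; using the worst-case cost $O(N\log^{d-2}N)$ of FLET on $N$ tuples, both operations cost $O(m_i\log^{d-2}m_i)$. Bounding each $m_i$ by $m_q$ and summing over the $O(\log m_q)$ passes gives the first term $O(m_q\log^{d-1}m_q)$.

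For verifyError I would note that each invocation draws $s_v=\lceil 18(\ln\log_2 n+\ln(1/\delta))/\epsilon\rceil$ tuples and, for each of them, tests domination against the current $\widetilde{Sky}$ by a linear scan. Since $\widetilde{Sky}$ is a subset of the sample and so has at most $m_q$ tuples, and a single dominance test is $O(d)=O(1)$, one invocation costs $O(s_v\,m_q)$; over the $O(\log m_q)$ passes this totals $O(s_v\,m_q\log m_q)$. Substituting $s_v=O((\ln\log n+\ln(1/\delta))/\epsilon)$ and absorbing the constant $1/\epsilon$ yields the second term $O((\ln\log n+\ln\tfrac{1}{\delta})\,m_q\log m_q)$, and adding the two groups completes the proof.

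The step I expect to require the most care is the per-pass accounting for the skyline work, because the obvious tight estimate disagrees with the claim. Summing the geometric series $\sum_i m_i\log^{d-2}m_i$ collapses to $O(m_q\log^{d-2}m_q)$, which is strictly smaller than the stated first term; the claimed $O(m_q\log^{d-1}m_q)$ instead arises from the coarser but valid bound that charges all $O(\log m_q)$ passes the worst-case cost of the final pass. I would also make explicit why mergeSkyline may be charged the same FLET cost --- merging two skylines is, in the worst case, no harder than recomputing the skyline of their union of size at most $m_q$ --- since this is the only point at which the incremental structure of DOUBLE, rather than a single run of FLET, enters the argument.
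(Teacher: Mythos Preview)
Your treatment of the skyline work is sound and matches the paper's conclusion; the paper frames it as the SD\&C recurrence $T'(m_q,d)=2T'(m_q/2,d)+M(m_q,d)$ with $M(m_q,d)=O(m_q\log^{d-2}m_q)$, while you charge each of the $O(\log m_q)$ passes the cost $O(m_q\log^{d-2}m_q)$ directly. Both routes give the first term $O(m_q\log^{d-1}m_q)$, and your observation that the geometric sum is actually tighter is correct.

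The gap is in your analysis of \texttt{verifyError}. You bound $|\widetilde{Sky}|\le m_q$ and then write that you ``absorb the constant $1/\epsilon$''. But $\epsilon$ is an input parameter of DOUBLE, not a constant: the theorem deliberately displays $\ln(1/\delta)$ and $\ln\log n$ yet contains no $1/\epsilon$, so you are not free to hide that factor. With your crude bound on $|\widetilde{Sky}|$ the verification cost is genuinely $O\bigl(\tfrac{1}{\epsilon}(\ln\log n+\ln\tfrac{1}{\delta})\,m_q\log m_q\bigr)$, which does not match the statement. The paper closes this gap by invoking Corollaries~\ref{c_continuous_0} and~\ref{c_discrete_0}: since the expected error on a sample of size $m$ is $\approx \mu_{m+1,d}/(m+1)$, and DOUBLE only stops once the error has been driven down to order $\epsilon$, the running skyline has size $|\widetilde{Sky}|=O(\epsilon\, m_q)$ rather than $O(m_q)$. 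With that sharper bound one gets $O(\epsilon\, m_q\cdot s_v)$ per call, and because $\epsilon\, s_v=O(\ln\log n+\ln(1/\delta))$ the $\epsilon$ cancels exactly, yielding the stated second term. This is the missing idea in your argument.
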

	\begin{proof}
		Except for verifyError, the algorithm process is completely equivalent to SD\&C \cite{bentley1978average}. Therefore, the total running time of getSkyline and mergeSkyline is $T'(m_q,d) = 2T'(m_q/2,d) + M(m_q,d)$, where $M(m_q,d)$ is $O(m_q\log^{d-2} m_q)$. Thus we have $T'(m_q,d)=O(m_q\log^{d-1} m_q)$. The number of verifications is $O(\log m_q)$. Based on corollary \ref{c_continuous_0} and \ref{c_discrete_0}, the size of $\widetilde{Sky}$ is $O(\epsilon m_q)$. Therefore, due to $\epsilon s_v= O(\ln\log n+\ln\frac{1}{\delta})$, the total running time of verifyError is $O(\epsilon m_qs_v\log m_q)$ =  $O((\ln\log n+\ln\frac{1}{\delta})m_q\log m_q))$. Finally, the time complexity of the algorithm is $O(m_q\log^{d-1}m_q + (\ln\log n+\ln\frac{1}{\delta})m_q\log m_q)$. 
		$\hfill\square$
	\end{proof}
	
	Even if $n$ is up to $2^{70}$, $\ln\log_2 n$ is less than $5$. Without loss of generality, the time complexity of DOUBLE is $O(m_q\log^{d-1} m_q)$. Through a proof similar to theorem \ref{t_incre2}, we get the following corollary. 
	
	\begin{corollary}
		For $\mathcal{M}_{\frac{\epsilon}{3},\delta}\ge \frac{1}{2^{\frac{1}{d-1}}-1}$ and $\delta\le 1/16$, if getSkyine and mergeSkyline are based on FLET \cite{bentley1993fast}, then the expected time complexity of DOUBLE is $O(\mathcal{M}_{\frac{\epsilon}{3},\delta}\log^{d-1}\mathcal{M}_{\frac{\epsilon}{3},\delta})$.
	\end{corollary}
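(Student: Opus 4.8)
The plan is to follow the template of the proof of Theorem~\ref{t_incre2}, replacing the running-time proxy $m_q$ with the quantity $m_q\log^{d-1}m_q$, which is the cost established in Theorem~\ref{t_incre3} once the low-order verification term is absorbed (as noted in the remark following that theorem). Writing $\mathcal{M}=\mathcal{M}_{\frac{\epsilon}{3},\delta}$ for brevity, I would first reuse the tail estimate on $m_q$ obtained inside Theorem~\ref{t_incre2}: there is a doubling step $j$ at which $m$ first enters $[\mathcal{M},2\mathcal{M})$, and the probability that DOUBLE performs $i-1$ further doublings (so that $m_q<2^i\mathcal{M}$) is at most $(2\delta)^{i-1}$. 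The case $m_q<\mathcal{M}$ is handled trivially, exactly as in Theorem~\ref{t_incre2}, so the only nontrivial contribution is the tail, giving
\begin{displaymath}
E[m_q\log^{d-1}m_q]\le \sum_{i\ge 1}(2^i\mathcal{M})\log^{d-1}(2^i\mathcal{M})\,(2\delta)^{i-1}.
\end{displaymath}

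The genuinely new ingredient, and the main obstacle, is controlling the factor $\log^{d-1}(2^i\mathcal{M})$: it cannot simply be pulled out as $\log^{d-1}\mathcal{M}$, because its argument grows geometrically along the summation. This is where the hypothesis $\mathcal{M}\ge \frac{1}{2^{1/(d-1)}-1}$ is used. It ensures the per-doubling estimate $\log^{d-1}(2x)\le 2\log^{d-1}x$ for every $x\ge\mathcal{M}$; this inequality amounts to $(1+1/\log_2 x)^{d-1}\le 2$, which holds precisely once $\log_2 x$ is at least the stated threshold. Applying it inductively across the $i$ doublings — each intermediate sample size remaining $\ge\mathcal{M}$, so the estimate keeps applying — yields $\log^{d-1}(2^i\mathcal{M})\le 2^i\log^{d-1}\mathcal{M}$.

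Substituting this bound into the tail sum collapses everything to a single geometric series:
\begin{displaymath}
E[m_q\log^{d-1}m_q]\le \mathcal{M}\log^{d-1}\mathcal{M}\sum_{i\ge 1}4^i(2\delta)^{i-1}.
\end{displaymath}
The common ratio of this series is $4\cdot 2\delta=8\delta$, whereas in Theorem~\ref{t_incre2} the analogous ratio was $4\delta$. The extra factor $2^i$ contributed by the logarithmic term is exactly what forces the strengthened hypothesis $\delta\le 1/16$ (in place of $\delta\le 1/8$): it keeps $8\delta\le \tfrac12$, so the series converges to a constant (bounded by $8$). Hence $E[m_q\log^{d-1}m_q]=O(\mathcal{M}\log^{d-1}\mathcal{M})$, and by Theorem~\ref{t_incre3} this is the expected running time of DOUBLE.

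In summary, the argument is a direct lift of the Theorem~\ref{t_incre2} computation, the only new step being the per-doubling logarithmic estimate licensed by the lower bound on $\mathcal{M}$; the two hypotheses play complementary roles, the bound on $\mathcal{M}$ taming the $\log^{d-1}$ growth per doubling and the bound on $\delta$ ensuring the resulting geometric series stays summable. I expect the logarithmic per-doubling inequality to be the delicate point, since it is what determines both the $\mathcal{M}$ threshold and the sharpened constraint on $\delta$.
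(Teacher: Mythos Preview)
Your approach is exactly what the paper intends: it gives no detailed proof of this corollary, stating only that it follows ``through a proof similar to Theorem~\ref{t_incre2}'', and your write-up carries out precisely that lift, correctly identifying that the per-doubling bound $\log^{d-1}(2x)\le 2\log^{d-1}x$ contributes an extra factor $2^i$ to the tail sum, which is why the constraint on $\delta$ tightens from $1/8$ to $1/16$.

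One minor slip worth noting: you correctly reduce the per-doubling inequality to $(1+1/\log_2 x)^{d-1}\le 2$, equivalently $\log_2 x\ge \frac{1}{2^{1/(d-1)}-1}$, but then say this ``holds precisely once $\log_2 x$ is at least the stated threshold'' and treat the hypothesis $\mathcal{M}\ge \frac{1}{2^{1/(d-1)}-1}$ as supplying it. Literally read, the hypothesis bounds $\mathcal{M}$, not $\log_2\mathcal{M}$, so it does not directly give what you need; the statement in the paper arguably should have been $\log_2\mathcal{M}\ge \frac{1}{2^{1/(d-1)}-1}$. This is a wrinkle in the paper's formulation rather than a flaw in your strategy, but since you single out this step as ``the delicate point'', you should flag the discrepancy rather than silently equate $x\ge\mathcal{M}$ with $\log_2 x\ge$ the threshold.
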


\section{EXPERIMENTAL RESULTS}
\label{sec_exp}
\subsection{Experimental Settings}
We implemented the two approximation algorithms in C++, and then ran the algorithms on Dell OptiPlex-7500(4 Cores, 8 Threads 3.6GHz i7 CPU + 16G memory + 64 bit Linux). The experiments consider the external storage and all data is stored in Seagate STDR4000(4TB). The experimental results are computed by averaging 20 executions of the approximation algorithm. 

Similar to the experimental design in \cite{han2012efficient}, there were four data sets used in experiments, three synthetic data sets (independent distribution, correlated distribution and anti-correlated distribution) and a real data set. In synthetic data sets, the tuple size is 128 bytes, and there are 8 numeric attributes and one redundancy attribute. For independent distribution, the values of tuples on each attribute are uniformly and independently distributed. For correlated distribution, the values on the first two attributes are generated with Pearson Correlation Coefficient (PCC) 0.5, and the others are uniform and independent.  For anti-correlated distribution, the values on the first two attributes are generated with PCC -0.5. The real data set comes from UCI Machine Learning Repository \cite{frank2010uci}, and are kinematic properties measured by the particle detectors in the accelerator, in which the tuple size is 1024 bytes. Each tuple has 29 numeric attributes, and the remaining space is redundant characters. Experiments set the disk-page size to $8192$ bytes. 

To verify the performance of DOUBLE, we compared it with LESS \cite{godfrey2005maximal} and BNL \cite{borzsony2001skyline}. We do not consider the index-based skyline algorithm. The practicability of the index-based algorithm is severely limited due to its high pre-calculation cost and space overhead. Consider the generic skyline algorithms, roughly can be divided into scan-based algorithms (such as BNL \cite{borzsony2001skyline}, SFS\cite{chomicki2003skyline}, SalSa\cite{bartolini2008efficient} and LESS \cite{godfrey2005maximal}) and divide\&conquer algorithms (such as D\&C \cite{kung1975finding}, LD\&C \cite{bentley1978average}, FLET \cite{bentley1993fast}, and SD\&C \cite{borzsony2001skyline}). Most of the existing divide\&conquer algorithms are not external, and their actual performance is much disappointing. Even SDC, a external divide\&conquer algorithm, is also inferior to LESS and BNL in actual performance. For scan-based algorithms, LESS combines the advantages of SFS and BNL, and has lower I/O cost than SalSa. Without loss of generality, $s_I$ is equal to $s_v$.

\subsection{Experiment 1: The Analysis about The Baseline Algorithm}
\begin{table}[t]
	\caption{Error of The Baseline Algorithm}
	\scriptsize
	\label{tab_ex1}
	\subtable[Independent Distribution]{
		\begin{tabular}{|c|l|l|l|l|l|} \hline
			\diagbox{d}{$\varepsilon$}{m} & 100 & 1000 & 10000 & 100000 & 1000000\\ \hline
			2 & 0.0618 & 0.00752 & 9.61e-04 & 1.32e-04 & 1.41e-05\\ \hline
			3 & 0.144 & 0.0303 & 0.00508 & 7.80e-04 & 1.07e-04\\ \hline
			4 & 0.266 & 0.0718 & 0.0165 & 0.00295 & 4.98e-04\\ \hline
			5 & 0.430 & 0.162 & 0.0419 & 0.00934 & 0.00187\\ \hline
		\end{tabular}
		\label{tab_independent}
	}
	\subtable[Correlated Distribution]{
		\begin{tabular}{|c|l|l|l|l|l|} \hline
			\diagbox{d}{$\varepsilon$}{m} & 100 & 1000 & 10000 & 100000 & 1000000\\ \hline
			2 & 0.0328 & 0.00388 & 5.08e-04 & 7.59e-05 & 6.29e-06\\ \hline
			3 & 0.105 & 0.0202 & 0.00277 & 4.62e-04 & 5.63e-05\\ \hline
			4 & 0.205 & 0.0525 & 0.0113 & 0.00194 & 3.23e-04\\ \hline
			5 & 0.371 & 0.123 & 0.0307 & 0.00665 & 0.00126\\ \hline
		\end{tabular}
		\label{tab_correlated}
	}
	\subtable[Anti-Correlated Distribution]{
		\begin{tabular}{|c|l|l|l|l|l|} \hline
			\diagbox{d}{$\varepsilon$}{m} & 100 & 1000 & 10000 & 100000 & 1000000\\ \hline
			2 & 0.1131 & 0.03212 & 9.52e-03 & 2.96e-03 & 9.39e-04\\ \hline
			3 & 0.227 & 0.0681 & 0.0201 & 0.00617 & 0.00189\\ \hline
			4 & 0.364 & 0.131 & 0.0420 & 0.0128 & 0.00385\\ \hline
			5 & 0.515 & 0.222 & 0.0792 & 0.0256 & 0.00792\\ \hline
		\end{tabular}
		\label{tab_anticorrelated}
	}
	\subtable[Real Data]{
		\begin{tabular}{|c|l|l|l|l|l|} \hline
			\diagbox{d}{$\varepsilon$}{m} & 100 & 1000 & 10000 & 100000 & 1000000\\ \hline
			2 & 0.0539 & 0.00952 & 1.16e-03 & 1.54e-04 & 1.75e-05 \\ \hline
			3 & 0.153 & 0.0327 & 0.00554 & 8.31e-04 & 1.07e-04 \\ \hline
			4 & 0.317 & 0.0884 & 0.0183 & 0.00353 & 5.52e-04 \\ \hline
			5 & 0.437 & 0.166 & 0.0419 & 0.00862 & 0.00155 \\ \hline
		\end{tabular}
		\label{tab_realdata}
	}
\end{table}

\begin{figure}[t]
	\centering
	\subfigure[Predicted Error] {
		\label{fig1a}
		\includegraphics[width=0.45\linewidth]{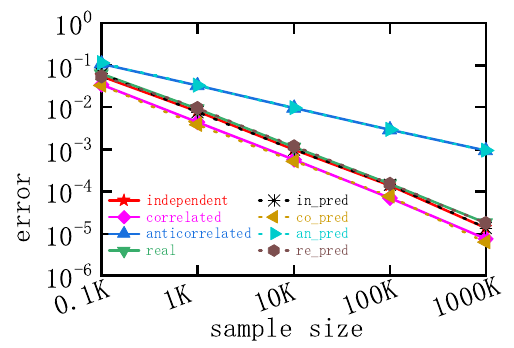}  
	} 
	\subfigure[Variance] {
		\label{fig1b}
		\includegraphics[width=0.45\linewidth]{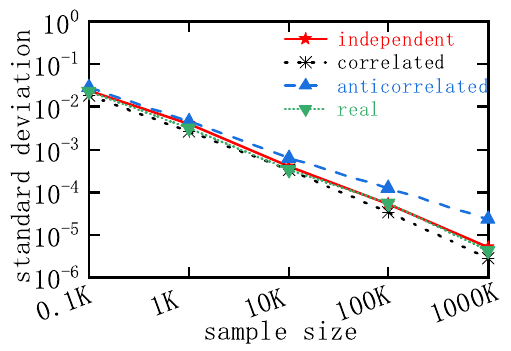}  
	} 
	\caption{Error of Approximation Algorithm}
	\label{fig1}
\end{figure}

In experiment 1, we verified the analysis about the baseline algorithm in section \ref{sec_baseline}. All four data sets were used. The skyline criterion sizes considered are 2, 3, 4, 5. The considered sample sizes are 0.1K, 1K, 10K, 100K, and 1000K (K=$10^3$). The data volumes of synthetic data sets considered are 1TB, whereas that of the real data set is 10GB, i.e. the maximum size of the real data set. It is found that the error of the baseline algorithm is not affected by data volume in experiments. Table \ref{tab_ex1} shows the error of the baseline algorithm on the four data sets, with varying sample sizes and skyline criterion sizes. The values in the table are the average obtained from multiple trials. As shown in table \ref{tab_independent}, in the case of two skyline criteria and independent distribution, the error of the baseline algorithm with size $1000$ is less than $0.01$. Even with $d=5$, the error is still $0.162$. In order to make the error lower than $0.01$, the algorithm only needs a sample of size less than 100000, which is much smaller than $n$ on big data. Table \ref{tab_correlated} shows that under correlated distribution, the error is relatively lower. Table \ref{tab_anticorrelated} demonstrates that even with anti-correlated distribution, a certain size of sample can achieve a small enough error. Under the case $d=5$, the error of the baseline algorithm with the sample size $10000$ is $0.0792$. On the real data set, the error with the sample size $10000$ is $0.0419$ even with $d=5$, shown in table \ref{tab_realdata}. On any data set, the algorithm with a sample of size 1000000 has a error less than $0.01$. One million is relatively small on big data. In addition, we can look up the table \ref{tab_ex1} to obtain the required sample size for the target error. 

Figure \ref{fig1} shows a further analysis of the error, verifying the analysis in section \ref{sec_baseline}. We can use the skyline proportion of the sample to approximate $\frac{\mu_{m+1,d}}{m+1}$, and then use it to predict the error of the algorithm, based on corollary \ref{c_continuous_0} and \ref{c_discrete_0}. Figure \ref{fig1a} compares the predicted and real errors. For each data set, the curves of the predicted and real errors closely fit. The fact that the predicted and real errors are nearly equal shows the correctness of corollary \ref{c_continuous_0} and \ref{c_discrete_0}. As shown in figure \ref{fig1b}, for all data sets, as the sample size increases, the standard deviation of the error decreases significantly. If the sample size is equal to 10K, the standard deviation is less than $0.001$ for all data sets. With a moderate size sample, the standard deviation is relatively small.

\subsection{Experiment 2: The Analysis about DOUBLE}
\subsubsection{Experiment 2.1: The Effect of Data Volume}
\begin{figure}[t]
	\centering
	\subfigure[Execution Time] {
		\label{fig2a}
		\includegraphics[width=0.45\linewidth]{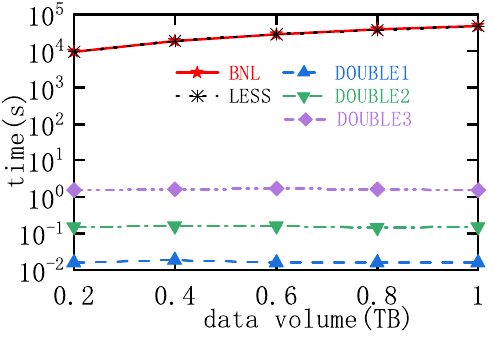}  
	}     
	\subfigure[I/O Overhead] {
		\label{fig2b}
		\includegraphics[width=0.45\linewidth]{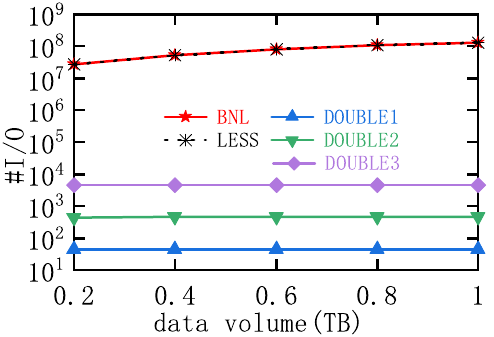}  
	}     
	\subfigure[Used Sample Size] { 
		\label{fig2c}     
		\includegraphics[width=0.45\linewidth]{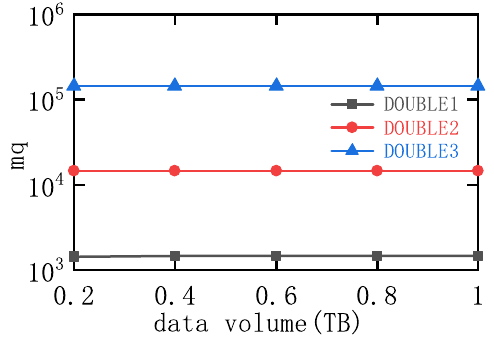}     
	}    
	\subfigure[Result's Representative Ratio] { 
		\label{fig2d}     
		\includegraphics[width=0.45\linewidth]{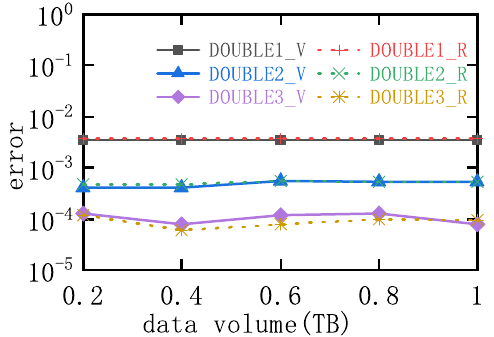}     
	}
	\caption{The Effect of Data Volume}
\end{figure}
In experiment 2.1, we validate the effect of the data volume on DOUBLE with $2$ skyline criteria. All algorithms are executed on independently distributed data set. The data volumes considered are 0.2TB, 0.4TB, 0.6TB, 0.8TB and 1.0TB. The considered error bounds for DOUBLE are 0.1, 0.01, and 0.001. DOUBLE with error bound 0.1 (respectively, 0.01 and 0.001) is represented by DOUBLE1 (respectively, DOUBLE2 and DOUBLE3). As shown in figure \ref{fig2a}, the execution time of DOUBLE is basically unaffected by the data volume, while these of LESS and BNL increase linearly with it. Indeed, the running time of DOUBLE is $O(1)$, relative to the data volume. Moreover, DOUBLE1 is nearly 6 orders of magnitude faster than LESS and BNL. Even DOUBLE2 (respectively, DOUBLE3) is 5 (respectively, 4) orders of magnitude faster on average. As shown in figure \ref{fig2b}, I/O overheads of BNL and LESS increase linearly with the data volume. They has to read each tuple in the table at least once. However, for DOUBLE, I/O cost has nothing to do with the data volume. The size of sample required by DOUBLE is almost independent of the number of tuples in data set. The I/O overhead of DOUBLE1 is 6 orders of magnitude less than these of LESS and BNL. The I/O overhead ratios between different algorithms is basically the same as the runtime ratios. It can be inferred that I/O overhead is dominant relative to CPU overhead. Figure \ref{fig2c} shows the actual sample size required by DOUBLE with respect to the data volume, i.e. $m_q$, which is found to be fixed. At this point, the initial sample is sufficient to cope with the target error bound. However, as the given error bound increases, the required sample size increases significantly. Figure \ref{fig2d} compares the verified error with the real error of the returned set. The curves of the two are relatively close. Both of them are less than the given error bound, and are hardly affected by the data volume. Moreover, the real error are excellent and the approximate skyline is sufficiently used to approach the real skyline. 

\subsubsection{Experiment 2.2: The Effect of Skyline Criterion Size}
\begin{figure}[t]
	\centering
	\subfigure[Execution Time] {
		\label{fig3a}
		\includegraphics[width=0.45\linewidth]{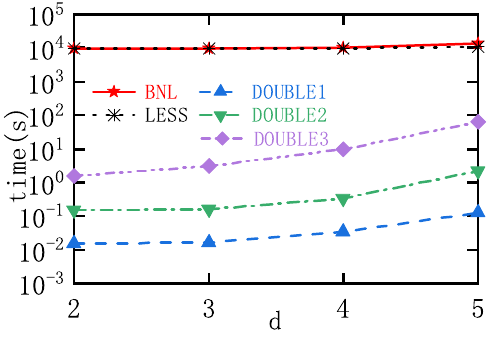}  
	}     
	\subfigure[I/O Overhead] {
		\label{fig3b}
		\includegraphics[width=0.45\linewidth]{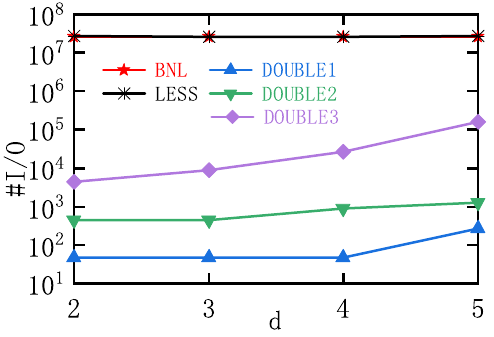}  
	}     
	\subfigure[Used Sample Size] { 
		\label{fig3c}     
		\includegraphics[width=0.45\linewidth]{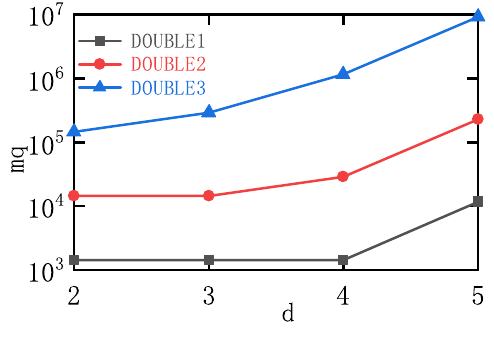}     
	}    
	\subfigure[Result's Representative Ratio] { 
		\label{fig3d}     
		\includegraphics[width=0.45\linewidth]{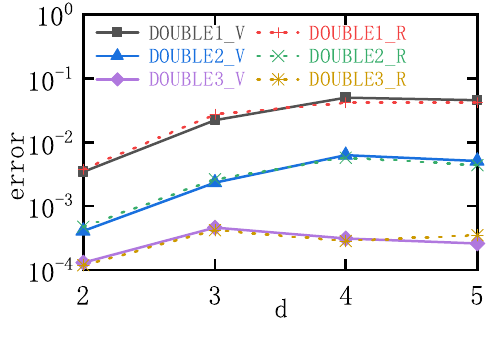}     
	}
	\caption{The Effect of Skyline Criterion Size}
\end{figure}
In experiment 2.2, we validate the effect of skyline criterion size on DOUBLE, with a fixed data volume of 0.2TB. The algorithms are excuted on independently distributed data set. The numbers of skyline criteria considered are 2, 3, 4, 5. As shown in figure \ref{fig3a}, the execution time of DOUBLE grows super linearly with the increase of criterion size. As the dimensionality increases, the required sample grows and exceeds the initial sample, which causes the growth of the execution time. However, in the case of high dimensions, the calculation of the skyline is almost meaningless. Moreover, even with 5 skyline criteria, DOUBLE3 is more than 2 orders of magnitude faster than LESS and BNL. The execution time of LESS and BNL grows slightly with the increase of the criterion size. An increase in $d$ leads to an increase in skyline cardinality, thereby increasing the CPU overheads of LESS and BNL. However, I/O overhead rather than CPU overhead is dominant for LESS and BNL. And the I/O overheads of LESS and BNL is not sensitive to the increase of criterion size, as shown in figure \ref{fig3b}. With 2, 3 and 4 skyline criteria, DOUBLE1 has the same I/O overhead. At this moment, the initial sample is sufficient to cope with the error bound. With more skyline criteria, I/O overhead increases. This is because the required sample size is larger than before and exceeds the initial sample size. Figure \ref{fig3c} shows the actual size of sample required by DOUBLE with respect to the criterion size. As the dimensionality increases, the required sample size grows and surpasses the initial sample size, then the actual sample size becomes larger. Figure \ref{fig3d} shows that the verified error of the returned set is extremely close to the real error. 

\subsubsection{Experiment 2.3: The Effects of Correlation and Anti-correlation}
\begin{figure}[t]
	\centering
	\subfigure[Execution Time] {
		\label{fig4a}
		\includegraphics[width=0.45\linewidth]{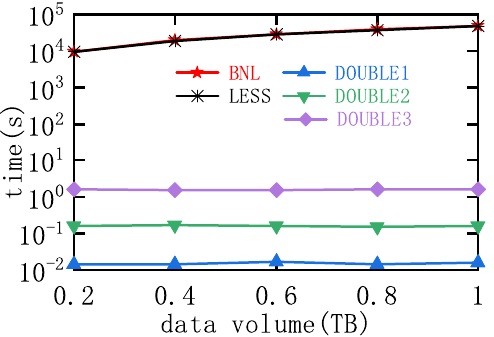}  
	}     
	\subfigure[I/O Overhead] {
		\label{fig4b}
		\includegraphics[width=0.45\linewidth]{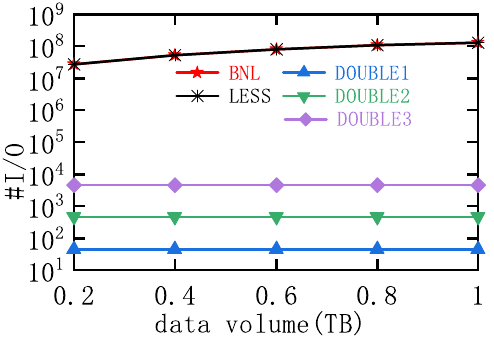}  
	} 
	\caption{The Effects of Correlation}
	\label{fig4}
\end{figure}
\begin{figure}[t]
	\centering
	\subfigure[Execution Time] {
		\label{fig4c}
		\includegraphics[width=0.45\linewidth]{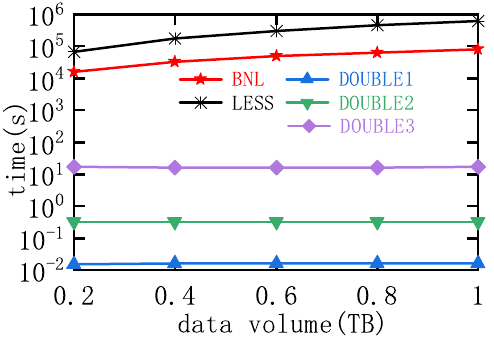}  
	}     
	\subfigure[I/O Overhead] {
		\label{fig4d}
		\includegraphics[width=0.45\linewidth]{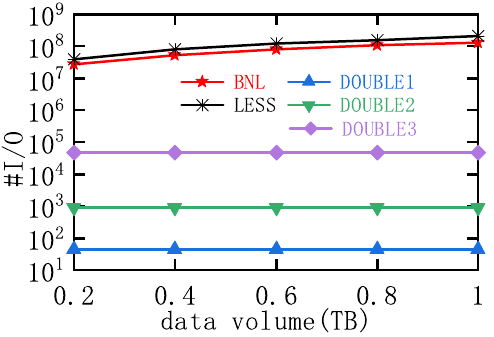}  
	} 
	\caption{The Effects of Anti-correlation}
\end{figure}
In experiment 2.3, we validate the effect of correlation and anti-correlation between attributes. Algorithms are executed on data sets under correlated and anti-correlated distribution with variable data volumes and 2 skyline criteria. As shown in figure \ref{fig4a} and \ref{fig4b}, for BNL and LESS, there is no difference in the execution time and I/O overheads between independent and correlated distributions. The same is true for DOUBLE1, DOUBLE2 and DOUBLE3. At this time, I/O overhead rather than CPU overhead is dominant for LESS and BNL. And the initial sample is sufficient to meet the error bound for DOUBLE. As shown in figure \ref{fig4c}, execution time of the two algorithms is significantly longer under anti-correlated distribution. At this time, CPU overhead is remarkably increased and begin to dominate. And LESS, which has a higher CPU overhead, is obviously not as good as BNL. Under anti-correlated distribution, DOUBLE2 and DOUBLE3 have higher execution time. At this time, the initial sample can no longer meet the demand. However, even under anti-correlated distribution, the execution time of DOUBLE3 is only about ten seconds. As shown in figure \ref{fig4d}, under anti-correlated distribution, except for LESS, the changes in I/O overhead of all algorithms are basically the same as the changes in runtime. The CPU overhead of LESS is dominant at this time, and its change is more obvious than that of I/O overhead. 

\subsubsection{Experiment 2.4: Real Data set}
\begin{figure}[t]
	\centering
	\subfigure[Execution Time] {
		\label{fig5a}
		\includegraphics[width=0.45\linewidth]{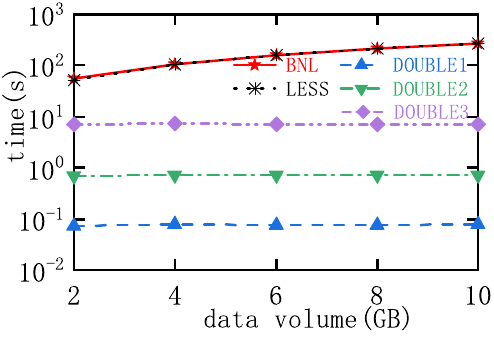}  
	}     
	\subfigure[I/O Overhead] {
		\label{fig5b}
		\includegraphics[width=0.45\linewidth]{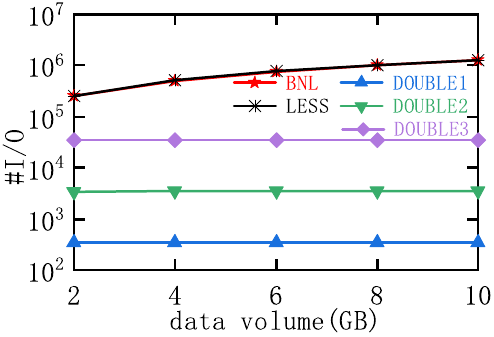}  
	} 
	\caption{Experiment on Real Data set}
\end{figure}
In experiment 2.4, we verify the effectiveness of DOUBLE on the real data set with variable data volumes and 2 skyline criteria. The considered data volumes are 2GB, 4GB, 6GB, 8GB and 10GB. The experimental results on the real data set are almost similar to those on the independent data set. As shown in figure \ref{fig5a}, DOUBLE3 runs nearly 1 orders of magnitude faster than LESS on average. At first glance, compared with independent distribution, the gap becomes smaller on the real data set. This is just because the real data set is much smaller than the synthetic data sets. As the amount of data increases, the gap is incrementally larger. The execution time of LESS and BNL increase linearly with the data volume. For DOUBLE, the execution time is basically not affected by the scale of data. the execution time of DOUBLE3  is a few seconds. As shown in figure \ref{fig5b}, for DOUBLE, I/O overhead has nothing to do with the data volume. But I/O overheads of LESS and BNL grow significantly. For space limitations, the paper do not illustrate the extra experiment statistics of DOUBLE on the real data set. 

\subsection{Summary}
In the experiments, compared to BNL and LESS, DOUBLE runs up to 4 orders of magnitude faster and retrieves up to 4 orders of magnitude fewer disk-pages. Both the execution time and I/O overhead of 3PHASE\_R has nothing to do with the data volume, while these of BNL and LESS are increased linearly, which causes the superiority of DOUBLE increasely obvious with the growth of the data volume. The ascendency of DOUBLE is noticeable. The efficiency of DOUBLE makes it even fully qualified on big data for interactive systems. These all reflect the value of DOUBLE in calculating the skyline on big data. 

When the size of skyline criteria increases, the execution time of DOUBLE increase moderately. Even so, DOUBLE is much faster than all other algorithms. Anti-correlation between attributes is similar to growth of the criterion size for DOUBLE, which increases the required sample size and the skyline cardinality of the sample. Even on anti-correlated distribution data set with 5 skyline criteria, the required sample size still does not exceed ten million, which is relatively small on big data. On distinct data sets, the experimental results verify the theoretical analysis in section \ref{sec_baseline}. It is shown that there is a inseparable relationship between the skyline cardinality and the expected error of the baseline algorithm, and the standard deviation is relately small. 

\section{Conclusion}
\label{sec_con}
In this paper, we proposed two sampling-based approximate algorithms for processing skyline queries on big data. The first algorithm draws a random sample of size $m$ and computes the approximate skyline on the sample. The expected error of the algorithm is almost independent of the input relation size and the standard deviation of the error is relatively small. The running time of the algorithm is $O(m\log^{d-2}m)$ in the worst case and $O(m)$ in the average case. Experiments show that with a moderate size sample, the algorithm has a low enough error. Given $\epsilon$ and $\delta$, the second algorithm returns an ($\epsilon,\delta$)-approximation of the exact skyline. The expected time complexity of the algorithm is  $O(\mathcal{M}_{\frac{\epsilon}{3},\delta}\log^{d-1}\mathcal{M}_{\frac{\epsilon}{3},\delta})$, where $\mathcal{M}_{\frac{\epsilon}{3},\delta}$ is the size of sample required by the first algorithm to return an ($\frac{\epsilon}{3},\delta$)-approximation. $\mathcal{M}_{\frac{\epsilon}{3},\delta}$ is up-bounded by $O(\mathcal{M}_{\epsilon,\delta})$ in most cases, and is almost unaffected by the relation size. Experiments show that the second algorithm is much faster than the existing skyline algorithms. 

\bibliographystyle{splncs04}
\bibliography{sample}

\end{document}